\theoremstyle{definition}
\newtheorem{proposition}{Proposition}
\newtheorem{claim}[proposition]{Claim}
\pgfplotsset{compat=1.8}
\begin{document}

\title{Sequential Circuit as Generalized Symmetry on Lattice}
\date{\today}

\author{Nathanan Tantivasadakarn}
\affiliation{Walter Burke Institute for Theoretical Physics,}
\affiliation{Department of Physics and Institute for Quantum Information and Matter, \mbox{California Institute of Technology, Pasadena, CA, 91125, USA}}

\author{Xinyu Liu}
\affiliation{Department of Physics and Institute for Quantum Information and Matter, \mbox{California Institute of Technology, Pasadena, CA, 91125, USA}}

\author{Xie Chen}
\affiliation{Walter Burke Institute for Theoretical Physics,}
\affiliation{Department of Physics and Institute for Quantum Information and Matter, \mbox{California Institute of Technology, Pasadena, CA, 91125, USA}}

\begin{abstract} 
Generalized symmetry extends the usual notion of symmetry to ones that are of higher-form, acting on subsystems, non-invertible, etc. The concept was originally defined in the field theory context using the idea of topological defects. On the lattice, an immediate consequence is that a symmetry twist is moved across the system by a sequential quantum circuit. In this paper, we ask how to obtain the full, potentially non-invertible symmetry action from the unitary sequential circuit and how the connection to sequential circuit constrains the properties of the generalized symmetries. We find that for symmetries that contain the trivial symmetry operator as a fusion outcome, which we call annihilable symmetries, the sequential circuit fully determines the symmetry action and puts various constraints on their fusion. In contrast, for unannihilable symmetries, like that whose corresponding twist is the Cheshire string, a further 1D sequential circuit is needed for the full description. Matrix product operator and tensor network operator representations play an important role in our discussion.
\end{abstract}

\maketitle


\section{Introduction}

Symmetry -- arguably one of the most important concepts in physics -- has recently gained new life in the context of quantum many-body systems. The usual types of symmetries discussed in the quantum many-body setting -- spin rotation, time reversal, lattice translation -- are unitary / anti-unitary operators that act globally on the whole system with symmetry charges carried by point-like operators. Each of these aspects can break down in the now generalized version of what is considered a symmetry\cite{Gaiotto2015,McGreevy2023,SCHAFERNAMEKI2024,Shao2024TASI}. A generalized symmetry does not need to act on the whole system but instead can be restricted to certain subsystems. The symmetry charge can be carried by extended objects like a line or a membrane. Moreover, a generalized symmetry operation may not have an inverse like (anti-)unitary
operators do and are called `non-invertible'. 

The notion of generalized symmetry was proposed and has been extensively studied in the field theory context. Formal mathematical concepts like the (higher) category theory has been applied to reveal their intricate structures. In the condensed matter community, the notion has drawn some interest especially with the proposal of concrete lattice models realizing these exotic symmetries\cite{Feiguin2007,Inamura2022,GarreRubio2023,Fechisin25,Seifnashri2024,Bhardwaj2025,Bhardwaj2025b}. More connections need to be made to reveal the full power of generalized symmetries in the context of condensed matter, which could lead to a generalized Landau paradigm of quantum phase and phase transitions\cite{McGreevy2023,Bhardwaj2023Landau}. In this paper, we address the question of `\textit{what is a generalized symmetry on the lattice}'. In answering it, we make use of a special type of quantum circuit -- sequential quantum circuits. 

It is interesting to think about symmetry from the perspective of quantum circuits. Most of the conventional (anti-)unitary global symmetries are implemented in an `on-site' way as a tensor product of local (anti-)unitary operators. For example, the spin flip in a spin-$1/2$ system is implemented as
\begin{equation}
\sigma^1_x \otimes \sigma^2_x \otimes ... \otimes \sigma^N_x
\end{equation}
Translation or other spatial symmetries involves mapping one local degree of freedom to another, but other than that, it is similar to the internal symmetries in that no entanglement is generated by the symmetry action and local operators retain their size when conjugated by the symmetry operation.

A more exotic type of symmetry is one with an anomaly. For example, the effective symmetry on the boundary of symmetry protected topological (SPT) phase. Take for instance the effective $Z_2$ symmetry on the $1$D boundary of the $2$D $Z_2$ SPT phase\cite{Chen2011,Levin2012}
\begin{equation}
\begin{array}{l}
\left(\alpha^{1,2} \otimes \alpha^{3,4} \otimes ... \otimes \alpha^{2N-1,2N}\right)  \left(\alpha^{2,3} \otimes \alpha^{4,5} \otimes ... \otimes \alpha^{2N,1}\right)
\\
\times \left(\sigma^1_x \otimes \sigma^2_x \otimes ... \otimes \sigma^{2N}_x \right)
\end{array}
\end{equation}
where $\alpha$ is a two-qubit diagonal unitary operator $\alpha = \text{diag}(1,i,i,1)$. The whole operator is unitary but it does not have a tensor product structure any more. Instead, it can be implemented with a few layers of tensor product of unitaries -- a finite depth quantum circuit. A generic product state becomes a many-body entangled state under the action of the unitary, and a local operator can grow in size by a finite amount upon conjugation by the circuit. Related to the non-onsite-ness of the symmetry operator is the fact that this is an anomalous $Z_2$ symmetry and no product state (or any short-range correlated state) are invariant under the symmetry.

Are generalized symmetries related to a more general class of quantum circuits when implemented on lattice? In section~\ref{sec:proof}, we present the argument that generalized symmetries, following the field theory definition,  are implemented as Sequential Quantum Circuits on the lattice. This fact was pointed out in Ref.~\onlinecite{Shao2024TASI} and was used to derive the action of the Kramers-Wannier symmetry on the Ising chain. Sequential Quantum Circuits\cite{Chen2024} are a much more powerful class of circuits than tensor product unitaries or finite depth circuits. They can generate long-range correlations, long-range entanglement, map between different gapped phases and do not necessarily preserve locality. Therefore, when implemented as symmetry operators, they can go beyond conventional symmetries implemented as a tensor product of unitaries or finite depth circuits. In section~\ref{sec:KW} and section~\ref{sec:Cheshire}, we illustrate this power with two examples: the Kramers-Wannier transformation as a non-invertible symmetry in 1D spin chain and the 2D symmetry with the Cheshire string as the symmetry twist acting on a topological state. The former maps between symmetric and symmetry breaking phases of the Ising chain and hence can generate long-range correlation. The latter, as we are going to show, effectively implements the $1$-form symmetry in a topological state in a $0$-form way. The Kramers-Wannier example is generalized in section~\ref{sec:1D} through the matrix product operator formalism to all 0-form generalized symmetries in 1d where a set of properties are proven starting from the fact that these generalized symmetries are implemented as 1d sequential circuits.

\begin{figure}[th]
\includegraphics[scale=0.65]{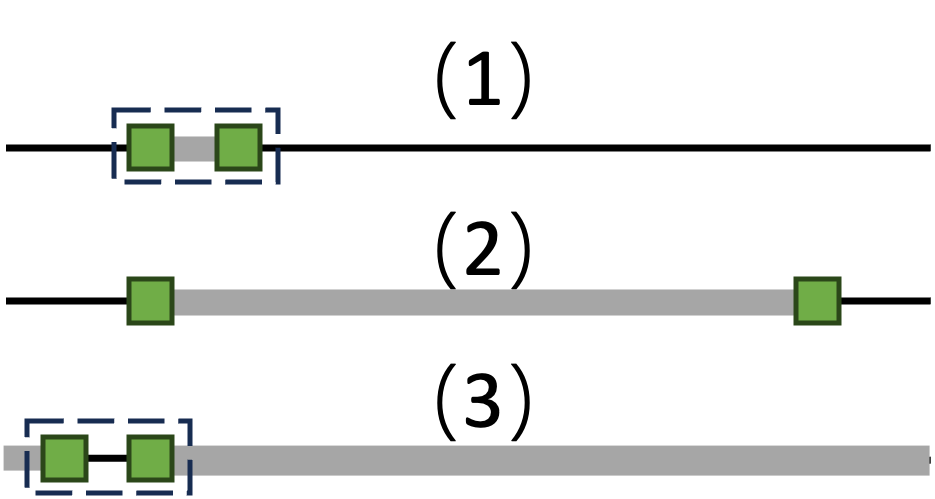}
\caption{Implementation process of a generalized symmetry. (1) symmetry is applied to a sub-region (dashed box) generating symmetry twists (green boxes) on the boundary; (2) a symmetry twist is swept through the bulk of the system; (3) symmetry twists are brought close back together and annihilated.}
\label{fig:symmetry}
\end{figure}

An immediate issue that needs to be addressed is how can the generalized symmetry be non-invertible if they are implemented as unitary quantum circuits. To understand how this is not contradictory, we need to look closer into the process a symmetry is implemented. As illustrated in Fig.~\ref{fig:symmetry}, we envision a process where the symmetry is first applied to a sub-region of the whole system. As long as the Hamiltonian of the system consists of local symmetric terms, only terms on the boundary of the sub-region might get changed, inducing the so-called `symmetry twist' on the boundary (step 1). The symmetry twist is then moved (swept) through the bulk of the system as the symmetry is applied to bigger and bigger sub-regions (step 2). Finally, the symmetry twists are annihilated and the symmetry action covers the whole system (step 3). In our argument in section~\ref{sec:proof}, we focus only on step (2) of sweeping the symmetry twist through the bulk of the system without addressing the how the symmetry twists are created and annihilated. While step (2) is implemented with a sequential quantum circuit, step (1) and (3) can involve non-unitary operations, resulting in a non-invertible symmetry action. Interestingly, as we show using the examples in
section~\ref{sec:KW} and section~\ref{sec:Cheshire}, the non-unitary part of the
operation can be deduced once we know all the sequential
circuits involved in the implementation (possibly in step (1) and (3) as well) and no extra input is needed. 

\section{Generalized Symmetry as Sequential Quantum Circuit}
\label{sec:proof}

In this section, we present the general argument for why a generalized symmetry -- as defined in Ref.~\cite{Gaiotto2015,Bhardwaj2023GS, Shao2024TASI}-- is implemented as a Sequential Quantum Circuit on the lattice. 

\begin{figure}[th]
\includegraphics[scale=0.35]{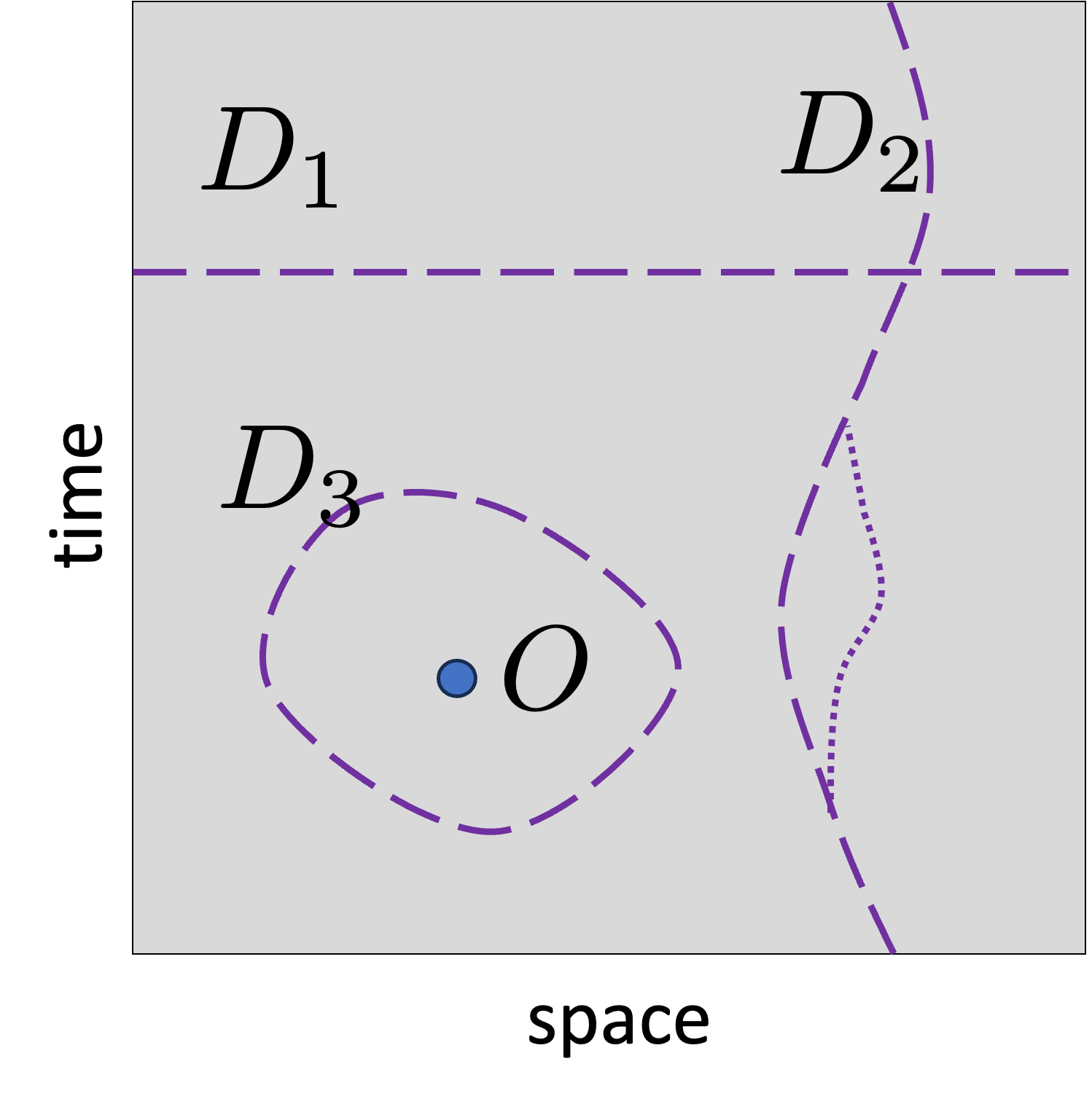}
\caption{Topological defects in a field theory. A defect (a dashed purple line) can extend along the spatial direction ($D_1$), the time direction ($D_2$), or act locally ($D_3$). A defect is called `topological' if smooth deformation of it does not change any correlation function.}
\label{fig:TopoDefect}
\end{figure}

In the field theory context, generalized symmetries are defined based on the notion of topological defects. Fig.~\ref{fig:TopoDefect} illustrates the situation in $1+1$d space time. Dashed lines represent defects in the path integral. A defect is called `topological' if the correlation function represented by the path integral remains invariant under smooth deformations of the defect, as long as the deformation does not pass through any inserted operators $O$ (for example from the dashed to the dotted $D_2$). When a topological defect extends purely along the spatial direction ($D_1$), it becomes the operator that applies the corresponding generalized symmetry to the underlying Hilbert space. When the topological defect extends in the time direction ($D_2$), it changes the Hilbert space by inserting a symmetry twist.

\begin{figure}[th]
\includegraphics[scale=0.35]{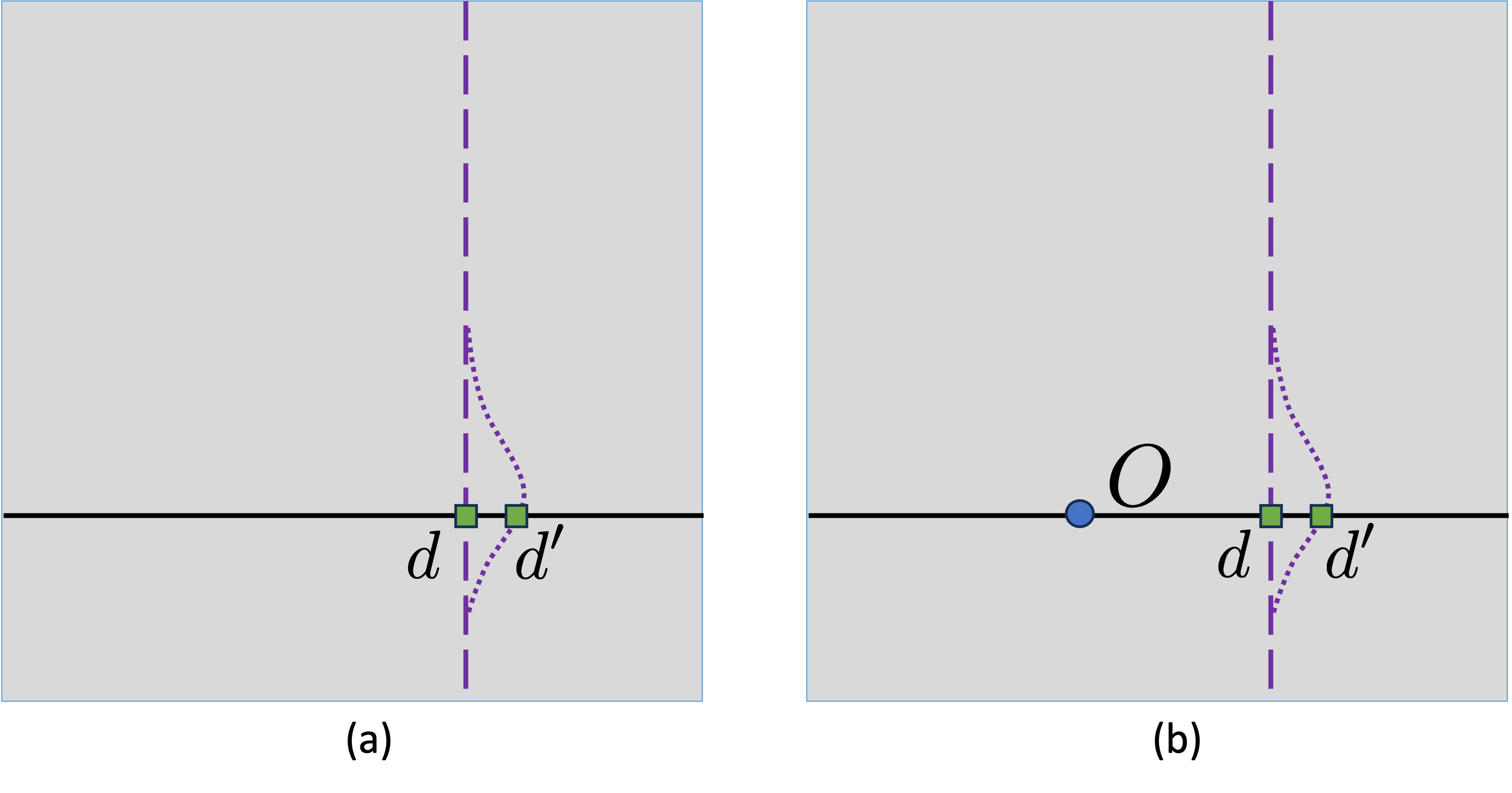}
\caption{Deforming a time direction topological defect (the dashed purple line) in a $1+1$d system and the resulting movement of the symmetry twist (green box) from $d$ to $d'$ in the absence (a) or presence (b) of other operators $O$. The black line indicates a cut in the spatial direction and exposes the underlying Hilbert space.}
\label{fig:Defect2Circuit}
\end{figure}

Now we argue that this definition directly translates into a sequential circuit when we go to the Hamiltonian formulation. Specifically, we are going to show that the movement of a symmetry twist from one location to a nearby location can be achieved by a local unitary transformation between the two locations. Therefore, sweeping the symmetry twist through the whole system to implement the full symmetry can be achieved with a sequential circuit composed of such local unitary steps. This statement applies when the symmetry twist is a point-like object. In higher dimensions ($2+1$d and higher), a symmetry twist can be an extended object like a line or a membrane. We are going to show that local deformations of a symmetry twist can be achieved with a local unitary around the location of the deformation. The sweeping of the whole twist is then achieved with a sequential circuit consisting of finite-depth circuit steps along the dimension of the twist. If the symmetry under consideration is a subsystem symmetry, the argument applies within the sub-manifold where the symmetry acts.

First, we focus on the case of point-like symmetry twists in $1+1$d systems. The discussion can be straight-forwardly generalized to point-like twists in higher dimensions. To expose the Hilbert space in the Hamiltonian formulation, we cut the space-time open along a spatial slice, as shown in Fig.~\ref{fig:Defect2Circuit} with the solid lines. The intersection between the spatial slice and a defect line, as indicated by a square box, is a symmetry twist. When there is a single defect that extends in the time direction, the Euclidean path integral gives the partition function of the system with a symmetry twist. 
\begin{equation}
Z \sim \text{Tr}\left(e^{-\beta H_d} \right)
\end{equation}
The topological nature of the defect indicates that the path integral remains invariant when the symmetry twist moves along the spatial slice (from $d$ to $d'$ in Fig.~\ref{fig:Defect2Circuit} (a)), as long as no other operators is inserted along the path of the movement. Since this holds for any $\beta$, $H_d$ and $H_{d'}$ must have the same spectrum and be related by a unitary transformation.
\begin{equation}
H_{d'} = U H_d U^{\dagger}
\end{equation}
The topological nature of the defect also requires that correlation functions of any operator $O$
\begin{equation}
\langle O \rangle = \text{Tr} \left(Oe^{-\beta H_d}\right)
\end{equation}
remains invariant as long as $O$ is outside of the interval from $d$ to $d'$. Therefore, $U$ must commute with $O$. Since $O$ can be any operator outside of the interval between $d$ and $d'$, $U$ has to be a local unitary operator between $d$ and $d'$. We hence reach the conclusion that moving a symmetry twist by a short distance is achieved with a local unitary transformation. To sweep a symmetry twist through the whole system and implement the full symmetry action requires then a sequential quantum circuit composed of such local unitary steps. 

\begin{figure}[th]
\includegraphics[scale=0.35]{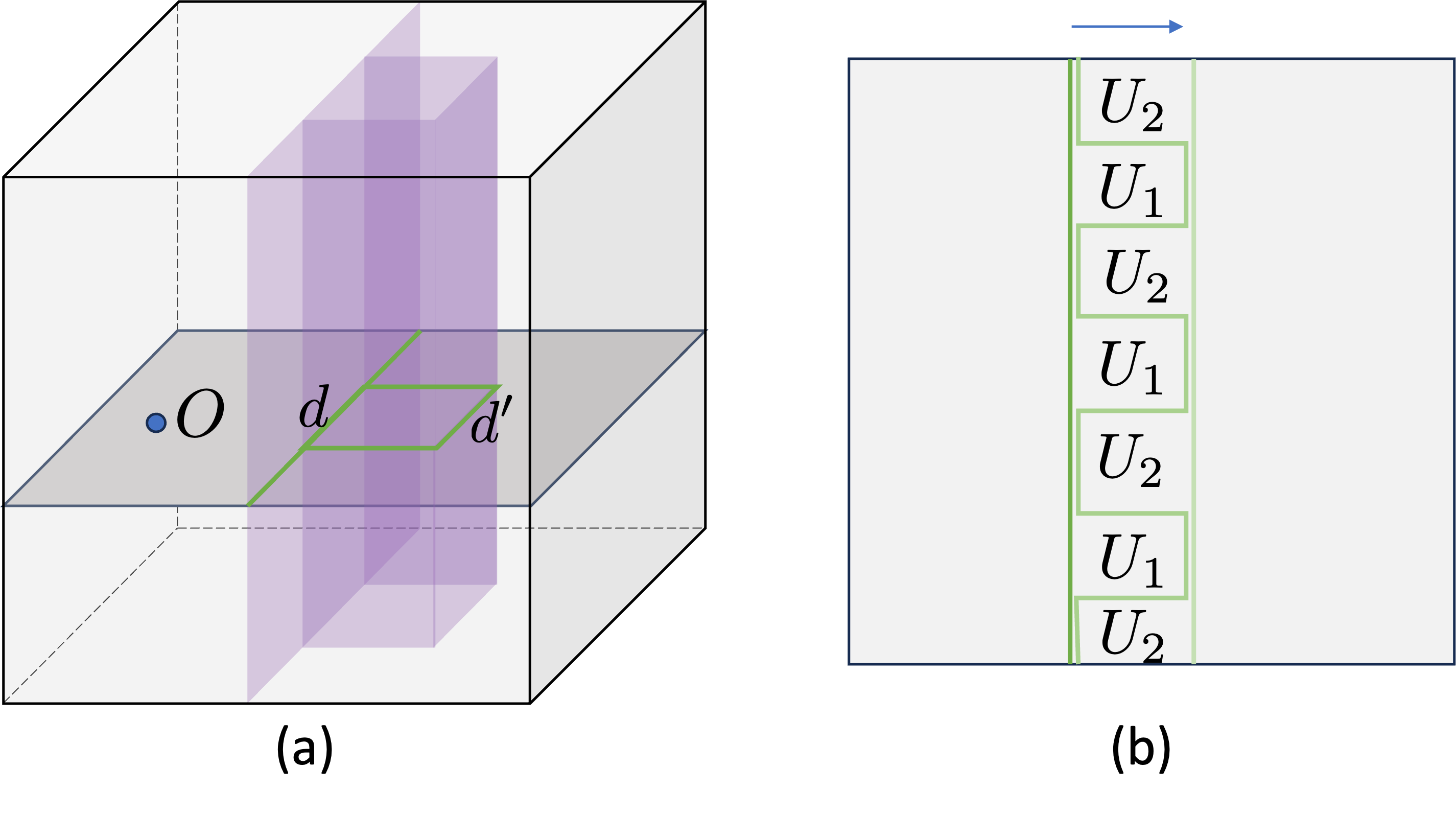}
\caption{(a) Deforming a time direction topological defect (the purple membrane) in a $2+1$d system and the resulting deformation of the symmetry twist (the green line) from $d$ to $d'$. (b) On the 2d spatial slice, moving a 1d symmetry twist (green line) by a finite distance can be achieved with a finite depth 1d circuit (applying all $U_1$'s in parallel and then applying all $U_2$'s in parallel).}
\label{fig:Defect2Circuit2}
\end{figure}

When the symmetry twist is itself an extended object, we need to consider local deformations of the twist before considering their overall movement. Fig.~\ref{fig:Defect2Circuit2} (a) illustrates the local deformation of a $1$d symmetry twist in a $2+1$d system. Again due to the topological nature of the defect and the corresponding invariance of correlation functions, we can conclude that local deformation of the symmetry twist can be achieved with a local unitary transformation at the location of the deformation. The overall movement of a symmetry twist can then be decomposed into two (or any finite number of) layers of commuting local unitary transformations, as shown in Fig.~\ref{fig:Defect2Circuit2} (b). Moving the symmetry twist by a finite distance is hence achieved with a finite depth quantum circuit. To sweep the line-like symmetry twist across the whole system, we need a sequential circuit composed of a sequence of finite depth circuits. 

Of course, as discussed in the introduction, the sequential circuit only covers step (2) of the implementation process. Step (1) and step (3) which creates the annihilates the symmetry twists can contain non-unitary operations.  Can we obtain the full symmetry action (including step (1) and (3)) starting from the sequential circuit in step (2)? In section~\ref{sec:KW} and section~\ref{sec:Cheshire} we discuss two very different cases. 

For the Kramers-Wannier transformation discussed in section~\ref{sec:KW}, the full non-invertible action of the symmetry can be obtained from the sequential circuit part with no extra input. We demonstrate how this can be achieved using the matrix product operator representation of the sequential circuit. Such a construction applies in general to 1D generalized symmetries, as discussed in section~\ref{sec:1D}. This is a natural consequence of the fact that the $0$d symmetry twists of $1$d symmetries can always be (pair)-generated from vacuum with local unitary transformations.  Through the matrix product representation of the symmetry, we can also see how this results in the 1d non-invertible symmetries having fusion rules of the form
\begin{equation}
\mathcal{D}^{\dagger} \times \mathcal{D} = I + ...
\end{equation}
That is, even though these symmetries are non-invertible, there exists an operator $\mathcal{D}^{\dagger}$ such that their fusion contains the identity as one of the fusion results. We call such symmetries `annihilable'.

The situation is very different for the Cheshire string example discussed in section~\ref{sec:Cheshire}. The Cheshire string as a symmetry twist cannot be (pair)-generated from vacuum with a $1$d finite depth quantum circuit. They can only be generated using a $1$d sequential circuit. Therefore, in this case, to obtain the full symmetry action, we need to supplement the $2$d sequential circuit in step (2) with $1$d sequential circuits in step (1) and (3). Each step can further contain non-unitary operations which, as we show in section~\ref{sec:Cheshire}, can be obtained from the sequential circuits with no extra input. We will see that the resulting symmetry action (let's denote it by $\mathcal{C}$) does nothing more than enforcing the $1$-form symmetry on a topological state. Therefore, this provides in a sense a $0$-form implementation of the $1$-form symmetry. The fact that the symmetry twists cannot be (pair) generated from the vacuum with finite depth circuits is directly related to the fact that there does not exist an operator $\mathcal{C}^{\dagger}$ such that its fusion with $\mathcal{C}$ has identity as one of the fusion channels. 
\begin{equation}
\mathcal{C}^{\dagger} \times \mathcal{C} \sim \mathcal{C}
\end{equation}
$\sim$ indicates that we are not very careful with the fusion coefficient. $\mathcal{C}$ is in a way more non-invertible than the Kramers-Wannier symmetry $\mathcal{D}$. We call such non-invertible symmetries `unannihilable'.

\section{Kramers-Wannier Duality as 1D symmetry}
\label{sec:KW}

The Kramers-Wannier duality\cite{Kramer1941} $\mathcal{D}$ acts on a $1$d chain of spin $1/2$'s with a $Z_2$ global symmetry $\eta = \prod_i X_i$, with $X_i$ being the Pauli-$X$ operator on the $i$-th spin. The $Z_2$ symmetric local operators are mapped under $\mathcal{D}$ as
\begin{equation}
X_i \xrightarrow{\mathcal{D}} Z_iZ_{i+1}, \ Z_iZ_{i+1} \xrightarrow{\mathcal{D}} X_{i+1}
\end{equation}
Therefore, the symmetric phase with Hamiltonian $H = -\sum_i X_i$ is mapped to the symmetry breaking phase with Hamiltonian $H = -\sum_i Z_iZ_{i+1}$ and vice versa, while the critical point at the transition is invariant under the symmetry. 

The Kramers-Wannier transformation is a prototypical example of a non-invertible symmetry which satisfies the fusion rule of
\begin{equation}
\mathcal{D}^{\dagger} \times \mathcal{D} = I + \eta
\end{equation}
It is also known that the transformation can be applied as a sequential circuit followed by a projection to the $Z_2$ symmetric sector. 

The Kramers-Wannier transformation is special in that it has been extensively studied and it is well understood that its non-invertible action is related to the fact that it maps the $Z_2$ symmetry charge to $Z_2$ symmetry twists and vice versa. Therefore, a projection to the no-charge and no-twist sector guarantees that the symmetry is well defined. For general non-invertible symmetries, it may not be immediately obvious what projection or non-unitary transformation is needed to complement the sequential circuit and give the full symmetry action.

We show how the full non-invertible action of the Kramers-Wannier symmetry can be obtained starting from the matrix product operator representation of the sequential circuit. This construction will be applied to all 1D generalized symmetries in section~\ref{sec:1D} and reveal some interesting common feature.

Our proposal is based on the expectation that the sequential quantum circuit should contain all the necessary information for the full description of the Kramers-Wannier transformation (and all 1D generalized symmetries as well). This is because the symmetry twists of Kramers-Wannier (as well as all 1D symmetries) are zero-dimensional objects which we should be able to generate from vacuum using a local unitary operation. Therefore, step 1 in Fig.~\ref{fig:symmetry} could very well be a local patch of the sequential circuit. In step 3, we reserve the process to annihilate the symmetry twists but with a projection to make sure the system end up in the vacuum state. So the sequential circuit should already contain all the information needed to construct the full symmetry action. One feature that is missing in the sequential circuit to describe the full symmetry --- other than the fact that it is invertible --- is that it is not translation invariant. In particular, it maps operators near the ends differently than the ones in the middle. A natural way to resolve this, which at the same time makes the symmetry action potentially non-invertible, is to take the matrix product representation of the bulk part of the sequential circuit and connect copies of it in a translation invariant way. We apply this procedure to $U_{\text{KW}}$ and show that we recover exactly the full non-invertible Kramers-Wannier symmetry derived in Ref.~\onlinecite{Shao2024TASI}.

\begin{figure}[th]
\includegraphics[scale=1.3]{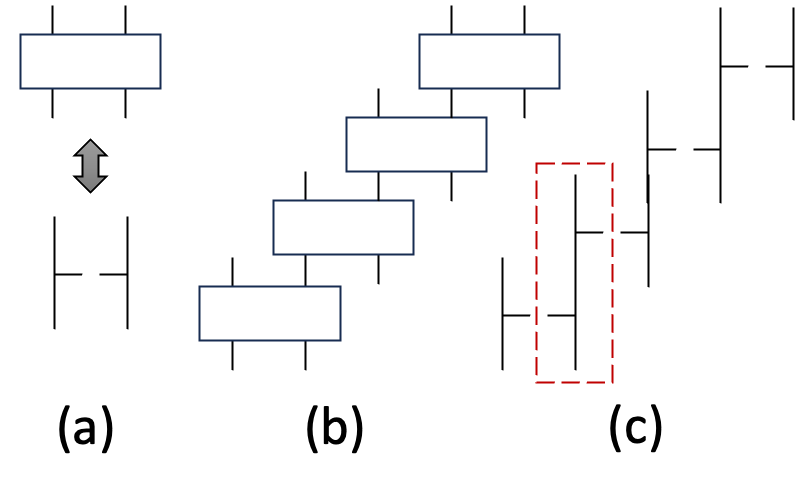}
\caption{Matrix product operator from a sequential circuit. (a) decomposing local unitary gates into tensors; (b) to (c) combining the tensors from each local gate in the sequential circuit gives the matrix product operator representation of the circuit.}
\label{fig:SC2MPO}
\end{figure}

The sequential circuit that implements the Kramers-Wannier transformation is composed of two-body unitaries
\begin{equation}
u_{i,i+1} = e^{i(\pi/4)Z_iZ_{i+1}}e^{i(\pi/4) X_{i}}
\end{equation}
To move the symmetry twist from site $1$ to site $N$, the sequential circuit takes the form
\begin{equation}
U_{\text{KW}} = u_{N-1,N} ... u_{2,3}u_{1,2}
\end{equation}
To obtain the full symmetry action including the non-unitary part, we make sure of the matrix product operator representation of the sequential circuit. As shown in Fig.~\ref{fig:SC2MPO}, each two-body unitary can be decomposed into two rank three tensors and, when they are put together, the sequential circuit can be represented as a matrix product operator. The tensors in the matrix product operator have two physical indices (up and down) and two virtual indices (left and right). In the bulk of the MPO representation, the tensors are the same as long as the two-body unitaries making up the sequential circuit are the same. Given the sequential circuit associated with the Kramers-Wannier transformation, we can derive the MPO representation. 

The two-body unitaries can be decomposed as,
\begin{align}
\label{eq:decomposition}
\sqrt{2} \ \tikz[baseline=-3pt]{\draw[black, thick] (-5,-0.25) rectangle (-3.8,0.25); \draw[black, thick](-4.7,-0.6)--(-4.7,-0.25);\draw[black, thick](-4.1,-0.6)--(-4.1,-0.25);\draw[black, thick](-4.7,0.6)--(-4.7,0.25);\draw[black, thick](-4.1,0.6)--(-4.1,0.25);\node at (-4.4,0){$u_{i,i+1}$};} \to \tikz[baseline=-3pt]{\draw[black, thick](-4.7,-0.6)--(-4.7,0.6);\draw[black, thick](-4.7,0)--(-4.3,0); \node at (-4.1,0){$0$};\node at (-4.3,0.5){$R(X)$};} +\tikz[baseline=-3pt]{\draw[black, thick](-4.7,-0.6)--(-4.7,0.6);\draw[black, thick](-4.7,0)--(-4.3,0); \node at (-4.0,0){$1$};\node at (-4.3,0.5){$iZR(X)$};} 
 \ \  \tikz[baseline=-3pt]{\draw[black, thick](-3.3,-0.6)--(-3.3,0.6);\draw[black, thick](-3.3,0)--(-3.6,0); \node at (-3.8,0){$0$};\node at (-3.5,0.5){$I$};}+ \tikz[baseline=-3pt]{\draw[black, thick](-3.3,-0.6)--(-3.3,0.6);\draw[black, thick](-3.3,0)--(-3.6,0); \node at (-3.8,0){$1$};\node at (-3.6,0.5){$Z$};}
\end{align}
where $R(X) = e^{i\pi/4 X}$. The indices in the vertical direction are input and output physical indices. The indices in the horizontal direction are virtual indices to be contracted to give the physical operator. Using the recombination illustrated in Fig.~\ref{fig:SC2MPO} (c), we find the MPO tensors to be
\begin{equation}
\sqrt{2}M = 
\tikz[baseline=-6pt]{\draw[black, thick] (-0.5,0) -- (0.5,0); \draw[black, thick] (0,-0.5) -- (0,0.5); \node at (-0.65,0){$0$};\node at (0,-0.65){$0$};\node at (0.65,0){$+$};\node at (0,0.65){$0$};} + \tikz[baseline=-6pt]{\draw[black, thick] (-0.5,0) -- (0.5,0); \draw[black, thick] (0,-0.5) -- (0,0.5); \node at (-0.65,0){$0$};\node at (0,-0.65){$0$};\node at (0.65,0){$-$};\node at (0,0.65){$1$};} + \tikz[baseline=-6pt]{\draw[black, thick] (-0.5,0) -- (0.5,0); \draw[black, thick] (0,-0.5) -- (0,0.5); \node at (-0.65,0){$1$};\node at (0,-0.65){$1$};\node at (0.65,0){$+$};\node at (0,0.65){$0$};} - \tikz[baseline=-6pt]{\draw[black, thick] (-0.5,0) -- (0.5,0); \draw[black, thick] (0,-0.5) -- (0,0.5); \node at (-0.65,0){$1$};\node at (0,-0.65){$1$};\node at (0.65,0){$-$};\node at (0,0.65){$1$};} 
\end{equation}
where $|+\rangle = |0\rangle + |1\rangle$, $|-\rangle = |0\rangle - |1\rangle$.

While the sequential circuit is composed of the same two-body unitaries, it is not fully translation invariant due to the boundaries. A fully translation invariant operator can be obtained by taking copies of the tensor $M$ and connect them in a translation invariant way. 
\begin{equation}
\mathcal{D} =  \tikz[baseline=-3pt]{\draw[black, thick] (-0.5,0) .. controls (-0.7,0.0) and (-0.7,0.1) .. (-0.6,0.2);
\draw[black, thick] (-0.5,0) -- (0.5,0); \draw[black, thick] (0,-0.5) -- (0,0.5); \node at (0.2,-0.2){$M$};}\tikz[baseline=-3pt]{\draw[black, thick] (-0.5,0) -- (0.5,0); \draw[black, thick] (0,-0.5) -- (0,0.5);}\tikz[baseline=-3pt]{\draw[black, thick] (-0.5,0) -- (0.5,0); \draw[black, thick] (0,-0.5) -- (0,0.5);}\cdots \cdots \tikz[baseline=-3pt]{\draw[black, thick] (-0.5,0) -- (0.5,0); \draw[black, thick] (0,-0.5) -- (0,0.5);\draw[black, thick] (0.5,0) .. controls (0.7,0.0) and (0.7,0.1) .. (0.6,0.2);}
\end{equation}
The curves at the two ends indicate that the left most index is contracted with the right most one. This gives us the non-invertible operator implementing the full Kramers-Wannier symmetry. This can be seen from the following tensor calculation
\begin{equation}
\mathbb{M} = \tikz[baseline=-3pt]{\draw[black, thick] (-0.5,-0.25) -- (0.5,-0.25); \draw[black, thick] (-0.5,0.25) -- (0.5,0.25);\draw[black, thick] (0,-0.7) -- (0,0.7); \node at (0.2,-0.4){$M$};\node at (0.25,0.45){$M^{\dagger}$};} = \tikz[baseline=-3pt]{\draw[black, thick] (-0.4,-0.25) -- (-0.2,-0.25) -- (-0.2,0.25) -- (-0.4,0.25); \draw[black, thick] (0.4,0.25) -- (0.2,0.25) -- (0.2,-0.25) -- (0.4,-0.25);\draw[black, thick] (0,-0.7) -- (0,0.7); \node at (-0.4,0) {$v_0$}; \node at (0.4,0) {$v_0$}; \node at (0.2,0.6) {$I$};} + \tikz[baseline=-3pt]{\draw[black, thick] (-0.4,-0.25) -- (-0.2,-0.25) -- (-0.2,0.25) -- (-0.4,0.25); \draw[black, thick] (0.4,0.25) -- (0.2,0.25) -- (0.2,-0.25) -- (0.4,-0.25);\draw[black, thick] (0,-0.7) -- (0,0.7);\node at (-0.4,0) {$v_1$}; \node at (0.4,0) {$v_1$}; \node at (0.2,0.6) {$X$}; } + \tikz[baseline=-3pt]{\draw[black, thick] (-0.4,-0.25) -- (-0.2,-0.25) -- (-0.2,0.25) -- (-0.4,0.25); \draw[black, thick] (0.4,0.25) -- (0.2,0.25) -- (0.2,-0.25) -- (0.4,-0.25);\draw[black, thick] (0,-0.7) -- (0,0.7); \node at (-0.4,0) {$v_3$}; \node at (0.4,0) {$v_0$}; \node at (0.2,0.6) {$Z$};}  + \tikz[baseline=-3pt]{\draw[black, thick] (-0.4,-0.25) -- (-0.2,-0.25) -- (-0.2,0.25) -- (-0.4,0.25); \draw[black, thick] (0.4,0.25) -- (0.2,0.25) -- (0.2,-0.25) -- (0.4,-0.25);\draw[black, thick] (0,-0.7) -- (0,0.7); \node at (-0.4,0) {$v_2$}; \node at (0.4,0) {$v_1$}; \node at (0.25,0.6) {$iY$}; } 
\end{equation}
\begin{equation}
\mathbb{M}(X) = \tikz[baseline=-3pt]{\draw[black, thick] (-0.5,-0.35) -- (0.5,-0.35); \draw[black, thick] (-0.5,0.35) -- (0.5,0.35);\draw[black, thick] (0,-0.9) -- (0,0.9); \node at (0.2,-0.5){$M$};\node at (0.25,0.55){$M^{\dagger}$};\filldraw[black] (0,0) circle (2pt) node[anchor=west]{X};} = \tikz[baseline=-3pt]{\draw[black, thick] (-0.4,-0.25) -- (-0.2,-0.25) -- (-0.2,0.25) -- (-0.4,0.25); \draw[black, thick] (0.4,0.25) -- (0.2,0.25) -- (0.2,-0.25) -- (0.4,-0.25);\draw[black, thick] (0,-0.7) -- (0,0.7); \node at (-0.4,0) {$v_3$}; \node at (0.4,0) {$v_3$}; \node at (0.2,0.6) {$I$}; } + \tikz[baseline=-3pt]{\draw[black, thick] (-0.4,-0.25) -- (-0.2,-0.25) -- (-0.2,0.25) -- (-0.4,0.25); \draw[black, thick] (0.4,0.25) -- (0.2,0.25) -- (0.2,-0.25) -- (0.4,-0.25);\draw[black, thick] (0,-0.7) -- (0,0.7); \node at (-0.4,0) {$v_0$}; \node at (0.4,0) {$v_3$}; \node at (0.2,0.6) {$Z$}; } + \tikz[baseline=-3pt]{\draw[black, thick] (-0.4,-0.25) -- (-0.2,-0.25) -- (-0.2,0.25) -- (-0.4,0.25); \draw[black, thick] (0.4,0.25) -- (0.2,0.25) -- (0.2,-0.25) -- (0.4,-0.25);\draw[black, thick] (0,-0.7) -- (0,0.7); \node at (-0.4,0) {$v_2$}; \node at (0.4,0) {$v_2$}; \node at (0.2,0.6) {$X$}; }  + \tikz[baseline=-3pt]{\draw[black, thick] (-0.4,-0.25) -- (-0.2,-0.25) -- (-0.2,0.25) -- (-0.4,0.25); \draw[black, thick] (0.4,0.25) -- (0.2,0.25) -- (0.2,-0.25) -- (0.4,-0.25);\draw[black, thick] (0,-0.7) -- (0,0.7); \node at (-0.4,0) {$v_1$}; \node at (0.4,0) {$v_2$}; \node at (0.25,0.6) {$iY$}; } 
\end{equation}
\begin{equation}
\mathbb{M}(Z) = \tikz[baseline=-3pt]{\draw[black, thick] (-0.5,-0.35) -- (0.5,-0.35); \draw[black, thick] (-0.5,0.35) -- (0.5,0.35);\draw[black, thick] (0,-0.9) -- (0,0.9); \node at (0.2,-0.5){$M$};\node at (0.25,0.55){$M^{\dagger}$};\filldraw[black] (0,0) circle (2pt) node[anchor=west]{Z};} = \tikz[baseline=-3pt]{\draw[black, thick] (-0.4,-0.25) -- (-0.2,-0.25) -- (-0.2,0.25) -- (-0.4,0.25); \draw[black, thick] (0.4,0.25) -- (0.2,0.25) -- (0.2,-0.25) -- (0.4,-0.25);\draw[black, thick] (0,-0.7) -- (0,0.7); \node at (-0.4,0) {$v_0$}; \node at (0.4,0) {$v_1$}; \node at (0.2,0.6) {$I$}; } + \tikz[baseline=-3pt]{\draw[black, thick] (-0.4,-0.25) -- (-0.2,-0.25) -- (-0.2,0.25) -- (-0.4,0.25); \draw[black, thick] (0.4,0.25) -- (0.2,0.25) -- (0.2,-0.25) -- (0.4,-0.25);\draw[black, thick] (0,-0.7) -- (0,0.7); \node at (-0.4,0) {$v_3$}; \node at (0.4,0) {$v_1$}; \node at (0.2,0.6) {$Z$}; } + \tikz[baseline=-3pt]{\draw[black, thick] (-0.4,-0.25) -- (-0.2,-0.25) -- (-0.2,0.25) -- (-0.4,0.25); \draw[black, thick] (0.4,0.25) -- (0.2,0.25) -- (0.2,-0.25) -- (0.4,-0.25);\draw[black, thick] (0,-0.7) -- (0,0.7); \node at (-0.4,0) {$v_1$}; \node at (0.4,0) {$v_0$}; \node at (0.2,0.6) {$X$}; }  + \tikz[baseline=-3pt]{\draw[black, thick] (-0.4,-0.25) -- (-0.2,-0.25) -- (-0.2,0.25) -- (-0.4,0.25); \draw[black, thick] (0.4,0.25) -- (0.2,0.25) -- (0.2,-0.25) -- (0.4,-0.25);\draw[black, thick] (0,-0.7) -- (0,0.7); \node at (-0.4,0) {$v_2$}; \node at (0.4,0) {$v_0$}; \node at (0.25,0.6) {$iY$}; } 
\end{equation}
where $v_0$ through $v_3$ are the four two-qubit Bell states. In particular, $v_i$ corresponds to an appropriately normalized doubled state of the Pauli matrix $\sigma^i$:
\begin{equation}
  \begin{aligned}
    |v_0\rangle = | I \rangle \rangle  =  \frac{1}{\sqrt{2}}|00\rangle + \frac{1}{\sqrt{2}}|11\rangle,\\
    |v_1\rangle= | X \rangle \rangle  = \frac{1}{\sqrt{2}}|01\rangle + \frac{1}{\sqrt{2}}|10\rangle,\\
    |v_2\rangle  = i|Y \rangle \rangle  = \frac{1}{\sqrt{2}}|01\rangle - \frac{1}{\sqrt{2}}|10\rangle,\\
    |v_3\rangle = | Z \rangle \rangle  = \frac{1}{\sqrt{2}}|00\rangle - \frac{1}{\sqrt{2}}|11\rangle
\end{aligned}  
\end{equation}

Connecting these tensors, we find
\begin{align}
\mathcal{D}^{\dagger} \times \mathcal{D} & =  \tikz[baseline=-3pt]{\draw[black, thick] (-0.5,-0.25) .. controls (-0.7,-0.25) and (-0.7,-0.35) .. (-0.6,-0.45);\draw[black, thick] (-0.5,-0.25) -- (0.5,-0.25); \draw[black, thick] (-0.5,0.25) .. controls (-0.7,0.25) and (-0.7,0.35) .. (-0.6,0.45);\draw[black, thick] (-0.5,0.25) -- (0.5,0.25);\draw[black, thick] (0,-0.7) -- (0,0.7); \node at (0.2,-0.4){$M$};\node at (0.25,0.45){$M^{\dagger}$};}\tikz[baseline=-3pt]{\draw[black, thick] (-0.5,-0.25) -- (0.5,-0.25); \draw[black, thick] (-0.5,0.25) -- (0.5,0.25);\draw[black, thick] (0,-0.7) -- (0,0.7);}\cdots \cdots \tikz[baseline=-3pt]{\draw[black, thick] (-0.5,-0.25) -- (0.5,-0.25); \draw[black, thick] (-0.5,0.25) -- (0.5,0.25);\draw[black, thick] (0,-0.7) -- (0,0.7); \draw[black, thick] (0.5,-0.25) .. controls (0.7,-0.25) and (0.7,-0.35) .. (0.6,-0.45);
\draw[black, thick] (0.5,0.25) .. controls (0.7,0.25) and (0.7,0.35) .. (0.6,0.45);} \\
 & =  I \otimes I ... \otimes I + X \otimes X ... \otimes X \\
 & =  I + \eta \label{eq:DdD}
\end{align}

\begin{align}
\mathcal{D}^{\dagger} X_i \mathcal{D} & =  \tikz[baseline=-3pt]{\draw[black, thick] (-0.5,-0.35) .. controls (-0.7,-0.35) and (-0.7,-0.45) .. (-0.6,-0.55);\draw[black, thick] (-0.5,-0.35) -- (0.5,-0.35); \draw[black, thick] (-0.5,0.35) .. controls (-0.7,0.35) and (-0.7,0.45) .. (-0.6,0.55);\draw[black, thick] (-0.5,0.35) -- (0.5,0.35);\draw[black, thick] (0,-0.7) -- (0,0.7); \node at (0.2,-0.5){$M$};\node at (0.25,0.55){$M^{\dagger}$};} \cdots \tikz[baseline=-3pt]{\draw[black, thick] (-0.5,-0.35) -- (0.5,-0.35); \draw[black, thick] (-0.5,0.35) -- (0.5,0.35);\draw[black, thick] (0,-0.9) -- (0,0.9); \node at (0.2,-0.5){$M$};\node at (0.25,0.55){$M^{\dagger}$};\filldraw[black] (0,0) circle (2pt) node[anchor=west]{X};}   \cdots \tikz[baseline=-3pt]{\draw[black, thick] (-0.5,-0.35) -- (0.5,-0.35); \draw[black, thick] (-0.5,0.35) -- (0.5,0.35);\draw[black, thick] (0,-0.7) -- (0,0.7); \draw[black, thick] (0.5,-0.35) .. controls (0.7,-0.35) and (0.7,-0.45) .. (0.6,-0.55);
\draw[black, thick] (0.5,0.35) .. controls (0.7,0.35) and (0.7,0.45) .. (0.6,0.55);} \\
 & =  Z_iZ_{i+1}\left(I+\eta\right) \label{eq:DdXD}
\end{align}

\begin{align}
\mathcal{D}^{\dagger} Z_iZ_{i+1} \mathcal{D} & =  \tikz[baseline=-3pt]{\draw[black, thick] (-0.5,-0.35) .. controls (-0.7,-0.35) and (-0.7,-0.45) .. (-0.6,-0.55);\draw[black, thick] (-0.5,-0.35) -- (0.5,-0.35); \draw[black, thick] (-0.5,0.35) .. controls (-0.7,0.35) and (-0.7,0.45) .. (-0.6,0.55);\draw[black, thick] (-0.5,0.35) -- (0.5,0.35);\draw[black, thick] (0,-0.7) -- (0,0.7); \node at (0.2,-0.5){$M$};\node at (0.25,0.55){$M^{\dagger}$};} \cdots \tikz[baseline=-3pt]{\draw[black, thick] (-0.5,-0.35) -- (0.5,-0.35); \draw[black, thick] (-0.5,0.35) -- (0.5,0.35);\draw[black, thick] (0,-0.9) -- (0,0.9); \node at (0.2,-0.5){$M$};\node at (0.25,0.55){$M^{\dagger}$};\filldraw[black] (0,0) circle (2pt) node[anchor=west]{Z};} 
\tikz[baseline=-3pt]{\draw[black, thick] (-0.5,-0.35) -- (0.5,-0.35); \draw[black, thick] (-0.5,0.35) -- (0.5,0.35);\draw[black, thick] (0,-0.9) -- (0,0.9); \node at (0.2,-0.5){$M$};\node at (0.25,0.55){$M^{\dagger}$};\filldraw[black] (0,0) circle (2pt) node[anchor=west]{Z};} \cdots \tikz[baseline=-3pt]{\draw[black, thick] (-0.5,-0.35) -- (0.5,-0.35); \draw[black, thick] (-0.5,0.35) -- (0.5,0.35);\draw[black, thick] (0,-0.7) -- (0,0.7); \draw[black, thick] (0.5,-0.35) .. controls (0.7,-0.35) and (0.7,-0.45) .. (0.6,-0.55);
\draw[black, thick] (0.5,0.35) .. controls (0.7,0.35) and (0.7,0.45) .. (0.6,0.55);} \\
 & =  X_{i+1}\left(I+\eta\right) \label{eq:DdZZD}
\end{align}
as well as
\begin{align}
\mathcal{D}^{\dagger} Z_i \mathcal{D} = 0
\end{align}
Due to the translation invariance of the matrix product representation of $\mathcal{D}$, these equations hold for all $i$. Eqs.~\ref{eq:DdD}, \ref{eq:DdXD}, \ref{eq:DdZZD} exactly reproduce the fusion and the mapping relations of the Kramers-Wannier transformation. Therefore, starting from the sequential circuit that moves the symmetry twist, we have recovered the full symmetry action of Kramers-Wannier using the matrix product operator representation.

The full symmetry action can be implemented as the sequential circuit supplemented with nonunitary operations at the end. Given the decomposition of the unitary in the sequential circuit, Eq.~\eqref{eq:decomposition}, we can implement the full symmetry action as follows:
\begin{enumerate}
\item Introduce two ancilla qubits in the maximally entangled state $|v_0\rangle = \frac{1}{\sqrt{2}}|00\rangle + \frac{1}{\sqrt{2}}|11\rangle$. Place one near the first physical qubit and one near the last. 
\item Apply a controlled operation controlled by the first ancilla qubit on the first physical qubit $|0\rangle \langle 0|\otimes I + |1\rangle \langle 1|\otimes Z$.
\item Apply the sequential unitaries $U_{\text{KW}} = u_{N-1,N}...u_{2,3}u_{1,2}$.
\item Apply a controlled operation controlled by the second ancilla qubit on the last physical qubit $|0\rangle \langle 0|\otimes R(X) + |1\rangle \langle 1|\otimes iZR(X)$.
\item Project the two ancilla qubits back into the maximally entangled state $|v_0\rangle  = \frac{1}{\sqrt{2}}|00\rangle + \frac{1}{\sqrt{2}}|11\rangle$.
\end{enumerate}
By following the evolution of the state through this procedure, we see how full symmetry action is achieved. In fact, when using the ancilla to simulate the contraction of the virtual indices, we miss by an overall factor of $\sqrt{2}$. To correct for this, we modify the projection in the last step to be in the state $|00\rangle + |11\rangle$. 
\begin{equation}
\begin{array}{lll}
& & \frac{1}{\sqrt{2}}|\psi\rangle \left(|00\rangle + |11\rangle\right) \\
& \to & \frac{1}{\sqrt{2}}|\psi\rangle|00\rangle + \frac{1}{\sqrt{2}}Z_1|\psi\rangle|11\rangle \\
& \to & \frac{1}{\sqrt{2}}U_{\text{KW}}|\psi\rangle|00\rangle + \frac{1}{\sqrt{2}}U_{\text{KW}}Z_1|\psi\rangle|11\rangle \\
& \to & \frac{1}{\sqrt{2}}R(X_N)U_{\text{KW}}|\psi\rangle|00\rangle + i\frac{1}{\sqrt{2}}Z_NR(X_N)U_{\text{KW}}Z_1|\psi\rangle|11\rangle \\
& \to & \frac{1}{\sqrt{2}}\left(I + iZ_N\tilde{Z}_1\right)U_{\text{KW}}R(X_1) |\psi\rangle
\end{array}
\end{equation}
where $\tilde{Z}_1 = R(X_N)U_{\text{KW}}Z_1 U^{\dagger}_{\text{KW}}R^{\dagger}(X_N) = -iZ_N\eta$. The full symmetry action hence becomes
\begin{equation}
\mathcal{D} = \frac{(1+\eta)}{\sqrt{2}}U_{\text{KW}}R(X_1)
\end{equation}
which exactly matches the expression given in Ref.~\onlinecite{Shao2024TASI} up to a normalization factor of $\sqrt{2}$. This normalization is important to match the fusion rule $\mathcal{D}^{\dagger}\times\mathcal{D} = I + \eta$, where the coefficient before each symmetry is a positive integer.

\section{1D Generalized Symmetry as Matrix Product Operators}
\label{sec:1D}

The Kramers-Wannier example discussed in the last section captures all the key features of generalized symmetries in 1D. The bulk of their action is composed of a (unitary) sequential circuit. The full symmetry action is not necessarily unitary. Starting from the sequential circuit in the bulk, the full symmetry action can be obtained by supplementing the unitary circuit with possibly non-unitary transformations at the ends. The matrix product operator representation of the full symmetry operation can be obtained from the matrix product representation of the middle part of the sequential circuit by connecting copies of it in a translation-invariant form. 

We will call the matrix product operators obtained in this way a \textit{Sequential Matrix Product Operator} (sMPO for short). Note that not all matrix product operators can be obtained in this way from a sequential circuit. For example, the tensor product of projection operators
\begin{equation}
|0\rangle \langle 0| \otimes |0\rangle \langle 0| \otimes ... \otimes |0\rangle \langle 0|  
\end{equation}
can be written as a matrix product operator with virtual indices of dimension one (it is a tensor product operator), but it cannot be implemented as a sequential circuit with non-unitaries at the endpoints. The (translation-invariant) matrix product unitary operators studied in Ref.~\onlinecite{Cirac2017,Sahinoglu2018} is a subset of sMPO, which represent invertible and locality preserving symmetries. The non-translation invariant matrix product unitary operators studied in Ref.~\onlinecite{Styliaris2025} contains sMPO as a subset.

The sMPO representation of generalized symmetries in 1D reveals many of their important properties, which we demonstrate in this section:
\begin{enumerate}
\item 1D generalized symmetries form a closed algebra under composition (fusion) and linear superposition. `Simple' objects in the space of symmetry operators are short-range correlated operators.
\item The fusion of two simple symmetries can result in the sum of more than one simple symmetry, each with a non-negative integer coefficient.
\item The result of fusion of a simple symmetry with its Hermitian conjugate contains one and only one summand that is identity.
\item The fusion of the Hermitian conjugate of a symmetry with a different symmetry cannot contain the identity channel.
\end{enumerate}

\begin{claim}
\label{claim:algebra}
1D generalized symmetries, represented by Sequential Matrix Product Operators, form a closed algebra under composition (fusion) and linear superposition.
\label{claim1}
\end{claim}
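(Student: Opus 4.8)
The plan is to establish the two closure statements separately -- closure of the set of sMPOs under (i) linear superposition and (ii) composition -- together with the observation that the identity is itself an sMPO (the MPO obtained from the trivial sequential circuit), so that sMPOs span a unital algebra of operators on the chain. Throughout I use the dictionary of Fig.~\ref{fig:SC2MPO}: the bulk tensor $M$ of an sMPO is read off from the two-body gate $u$ of the underlying sequential circuit by operator-Schmidt-decomposing $u=\sum_{\alpha}A_{\alpha}\otimes B_{\alpha}$ and recombining the $B_{\alpha}$ of one gate with the $A_{\beta}$ of the next; equivalently, the full symmetry operator is presented on an open chain as a sequential sweep dressed by (possibly non-unitary) operations supported within $O(1)$ of each endpoint, exactly as in the ancilla construction at the end of Section~\ref{sec:KW}. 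I will also allow, as there, the on-site Hilbert space to be enlarged and $O(1)$ ancillas to be appended at the two ends; this freedom is what makes the class closed.

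For linear superposition, suppose $\mathcal{O}_1,\mathcal{O}_2$ are sMPOs with bulk tensors $M_1,M_2$ and bond spaces $V_1,V_2$. Then $c_1\mathcal{O}_1+c_2\mathcal{O}_2$ is the translation-invariant MPO with bond space $V_1\oplus V_2$ and block-diagonal bulk tensor $c_1M_1\oplus c_2M_2$; closing the chain periodically, the trace over a block-diagonal bond matrix splits into the two sectors and returns $c_1\mathcal{O}_1+c_2\mathcal{O}_2$. It remains to exhibit $c_1M_1\oplus c_2M_2$ as a \emph{sequential-circuit} bulk tensor. I would do this by running the two circuits in superposition on the enlarged on-site space $\mathcal{H}_1\oplus\mathcal{H}_2$: the two-body gate acts as $u_1$ on $\mathcal{H}_1^{\otimes 2}$, as $u_2$ on $\mathcal{H}_2^{\otimes 2}$, and as an arbitrary unitary on the cross sectors (needed only to make the gate genuinely unitary), with an endpoint operation that prepares/projects the boundary bond onto a state with no weight on the cross sectors so those contributions drop out. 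Equivalently, attach a flag qubit at the left end, prepare it as $\sqrt{|c_1|}\,|0\rangle+\sqrt{|c_2|}\,|1\rangle$ (with the phases of $c_1,c_2$), propagate the flag as the cursor sweeps, apply $u_1$ or $u_2$ conditioned on the local copy of the flag, uncopy it on a return sweep, and project it out.

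For composition, write each symmetry in the open-chain form $\mathcal{O}_j=E_j^{R}\,U_j\,E_j^{L}$, where $U_j$ is the sequential sweep and $E_j^{L},E_j^{R}$ are the operations supported within $O(1)$ of the two ends (including their ancillas). Then $\mathcal{O}_1\mathcal{O}_2=E_1^{R}U_1E_1^{L}E_2^{R}U_2E_2^{L}$ has the shape (endpoint)$\,U_1\,$(endpoint)$\,U_2\,$(endpoint), and the goal is to reorganize it into the canonical form (endpoint)$\,U_{12}\,$(endpoint). Since each gate of $U_1$ and of $U_2$ has $O(1)$ range, gates of the two sweeps separated by more than a fixed distance commute, so I can interleave the two left-to-right sweeps into a single staggered left-to-right sweep $U_{12}$ (the $U_1$-cursor trailing the $U_2$-cursor by a constant offset), which is again a sequential circuit on the combined on-site space. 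The middle factor $E_1^{L}E_2^{R}$ is executed at the start (its left-end part, before the $U_1$-cursor enters the left region) and at the end (its right-end part, after the $U_2$-cursor exits the right region); collecting it with $E_1^{R},E_2^{L}$ gives the new endpoint operations. At the MPO level this is exactly the statement that the product MPO, with bond space $V_1\otimes V_2$, is again of sequential-circuit type, and the periodic closure of $U_1\cdots U_2$ manifestly agrees with $\mathcal{O}_1\mathcal{O}_2$ because matrix multiplication of MPOs commutes with closing the chain.

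The main obstacle I expect is the composition step, and specifically the bookkeeping that turns ``one sweep, then endpoint surgery, then a second sweep'' into ``endpoint surgery, one combined sweep, endpoint surgery.'' The subtlety is that conjugating an endpoint operation by a full sweep can delocalize it completely -- exactly as $\tilde Z_1=-iZ_N\eta$ in Section~\ref{sec:KW} -- so one cannot naively commute the middle endpoint factor past a sweep and keep it local; instead the argument must use that each sweep is built from strictly local gates, so that the middle endpoint operation is inserted \emph{among} the appropriate local gates of the staggered sweep and never transported across it. A secondary point needing care is the superposition step: one must verify that enlarging the on-site space and adding the flag/ancilla registers does not change the operator the closed-chain sMPO computes -- i.e.\ that the cross-sector and flag contributions are exactly projected away -- and that the normalization (as in the $\sqrt{2}$ correction of Section~\ref{sec:KW}) is tracked so that the fusion coefficients come out as the non-negative integers asserted in the subsequent claims.
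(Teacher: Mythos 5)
Your proposal follows essentially the same route as the paper: composition is realized by interleaving the two left-to-right sweeps into a single staggered sequential circuit with the non-unitary endpoint operations kept on separate ancilla registers and deferred to the ends, and linear combination is realized by a controlled sequential circuit whose control (the paper uses the boundary ancilla pair of dimension $d_{\alpha}+d_{\beta}$; you use a propagated flag qubit) is prepared in a superposition weighted by the coefficients and projected out at the end. The only slip is your first phrasing of the superposition step as an enlargement of the on-site physical space to $\mathcal{H}_1\oplus\mathcal{H}_2$ --- the direct sum must live in the bond/ancilla space, not the physical space, since the resulting operator must act on the original chain --- but your ``equivalently'' flag-qubit construction already corrects this and matches the paper's argument.
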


Suppose that we have two generalized symmetries $\mathcal{D}_{\alpha}$ and $\mathcal{D}_{\beta}$ and each can be implemented using the procedure illustrated in the last section for the Kramer-Wannier transformation. We start with a maximally entangled pair of ancillas corresponding to the virtual indices of the MPO representation, apply a sequential circuit to the ancillas and the physical degrees of freedom, and finally decouple the ancillas from the physical DOF using certain projection operation. The goal is to show that the composition of these two symmetries $\mathcal{D}_{\beta}\times\mathcal{D}_{\alpha}$ as well as any linear combination of these two symmetries $a\mathcal{D}_{\alpha}+b\mathcal{D}_{\beta}$ can be implemented using the same procedure.

\begin{proof}
Suppose that for the implementation of $\mathcal{D}_{\alpha}$ we need ancillas with dimension $d_{\alpha}$ and for $\mathcal{D}_{\beta}$ we need $d_{\beta}$. 

To implement $\mathcal{D}_{\beta}\times\mathcal{D}_{\alpha}$, we start with two pairs of ancillas with dimension $d_{\alpha}$ and $d_{\beta}$ respectively and initialized in their maxially entangled state $\frac{1}{\sqrt{d_{\alpha}}}\sum_{i=1}^{d_{\alpha}} |ii\rangle$ and $\frac{1}{\sqrt{d_{\beta}}}\sum_{j=1}^{d_{\beta}} |jj\rangle$. Then we apply the sequential circuit part of $\mathcal{D}_{\alpha}$ and $\mathcal{D}_{\beta}$,
\begin{equation}
\begin{array}{l}
U_{\alpha} = u^{\alpha}_{N-1,N}u^{\alpha}_{N-2,N-1}...u^{\alpha}_{1,2} \\
U_{\beta} = u^{\beta}_{N-1,N}u^{\beta}_{N-2,N-1}...u^{\beta}_{1,2}
\end{array}
\end{equation}
in a parallel way such that they combine into a single sequential circuit. For example, we can apply the unitaries in the two circuits following this order
\begin{equation}
u^{\beta}_{N-1,N}u^{\beta}_{N-2,N-1}u^{\alpha}_{N-1,N}...u^{\alpha}_{4,5}u^{\beta}_{2,3}u^{\alpha}_{3,4}u^{\beta}_{1,2}u^{\alpha}_{2,3}u^{\alpha}_{1,2}
\end{equation}
which is equivalent to $U_{\beta}U_{\alpha}$
but organized in a sequential way. Finally, we project the two pairs of ancillas back into their maximally entangled state and complete the implementation of $\mathcal{D}_{\beta}\times\mathcal{D}_{\alpha}$.

To implement $a\mathcal{D}_{\alpha}+b\mathcal{D}_{\beta}$, we take a pair of ancillas with dimension $d_{\alpha}+d_{\beta}$ and initialize them in the entangled state $\frac{a}{\sqrt{d_{\alpha}}}\sum_{i=1}^{d_{\alpha}} |ii\rangle + \frac{b}{\sqrt{d_{\beta}}}\sum_{i=d_{\alpha}+1}^{d_{\alpha}+d_{\beta}} |ii\rangle$. Then we apply a controlled sequential circuit using the ancilla as a control: if the ancilla is in the $d_{\alpha}$ dimensional subspace, we apply the $u^{\alpha}$ gates and if the ancilla is in the $d_{\beta}$ dimensional subspace, we apply the $u^{\beta}$ gates. This can be done in a sequential way. Finally, we project the ancillas in the maximally entangled state $\frac{1}{\sqrt{d_{\alpha}}}\sum_{i=1}^{d_{\alpha}} |ii\rangle + \frac{1}{\sqrt{d_{\beta}}}\sum_{i=d_{\alpha}+1}^{d_{\alpha}+d_{\beta}} |ii\rangle$and complete the implementation of $a\mathcal{D}_{\alpha}+b\mathcal{D}_{\beta}$.
\label{proof:1}
\end{proof}

Therefore, generalized symmetries in a 1D system form an algebra -- a vector space equipped with a product. To match the normalization of the translation invariant matrix product operator obtained from the sequential circuit, the normalization of the non-unitary step at the end needs to be properly chosen. We will discuss the normalization more carefully below. Among all the operators in the vector space, some are special and called `simple'. The simple symmetries form a basis to decompose any generalized symmetry operator in the vector space. Matrix product operators, on the other hand, has a natural decomposition into `injective' matrix product operators given by its canonical form as reviewed in Appendix~\ref{app:MPS}. Therefore, we expect `simple' symmetries to be represented by `injective' matrix product operators. Injectivity means that the matrix product operator is short range correlated. That is, a simple object $\mathcal{D}_{\alpha}$ in a fusion category symmetry satisfies
\begin{equation}
\begin{array}{ll}
\frac{1}{\mathcal{N}}\text{Tr}\left(O_iO_j\mathcal{D}_{\alpha}^{\dagger}O'_iO'_j\mathcal{D}_{\alpha}\right) - \\ \frac{1}{\mathcal{N}^2}\text{Tr}\left(O_i\mathcal{D}_{\alpha}^{\dagger}O'_i\mathcal{D}_{\alpha}\right)\text{Tr}\left(O_j\mathcal{D}_{\alpha}^{\dagger}O'_j\mathcal{D}_{\alpha}\right) \\
\sim \exp(-|i-j|/\xi_{\alpha})
\end{array}
\end{equation}
when the distance between $i$ and $j$, $|i-j|$, becomes large. $\mathcal{N}$ is the normalization $\text{Tr}\left(\mathcal{D}^{\dagger}_{\alpha}\mathcal{D}_{\alpha}\right)$. This condition holds for any local operators $O_i$, $O_j$, $O'_i$, $O'_j$. An equivalent way of saying this is that if the input and output physical indices at each site are combined into a single physical index, the 1D operator becomes a short range correlated 1D state with correlation length $\xi_{\alpha}$. 
\begin{align}
\tikz[baseline=0pt]{\draw[black, thick] (-3,0) .. controls (-3.2,0) and (-3.2,0.10) .. (-3.1,0.20);\draw[black, thick] (-3,0) -- (3,0); \draw[black, thick] (0,-0.5) -- (0,0.5); \draw[black, thick] (-2.4,-0.5) -- (-2.4,0.5);\draw[black, thick] (-1.6,-0.5) -- (-1.6,0.5);\draw[black, thick] (-0.8,-0.5) -- (-0.8,0.5);\draw[black, thick] (2.4,-0.5) -- (2.4,0.5);\draw[black, thick] (1.6,-0.5) -- (1.6,0.5);\draw[black, thick] (0.8,-0.5) -- (0.8,0.5);\node at (0.2,-0.2){$M_{\alpha}$};\draw[black, thick] (3,0) .. controls (3.2,0) and (3.2,0.10) .. (3.1,0.20);} \\
\to \tikz[baseline=0pt]{\draw[black, thick] (-3,0) .. controls (-3.2,0) and (-3.2,0.10) .. (-3.1,0.20);\draw[black, thick] (-3,0) -- (3,0); \draw[black, thick] (0,0.5) -- (0,0) .. controls (0.1,-0.1).. (0.2,0)--(0.2,0.45); \draw[black, thick] (-2.4,0.5) -- (-2.4,0) .. controls (-2.3,-0.1).. (-2.2,0)--(-2.2,0.45); \draw[black, thick] (-1.6,0.5) -- (-1.6,0) .. controls (-1.5,-0.1).. (-1.4,0)--(-1.4,0.45);\draw[black, thick] (-0.8,0.5) -- (-0.8,0) .. controls (-0.7,-0.1).. (-0.6,0)--(-0.6,0.45);\draw[black, thick] (2.4,0.5) -- (2.4,0) .. controls (2.5,-0.1).. (2.6,0)--(2.6,0.45);\draw[black, thick] (1.6,0.5) -- (1.6,0) .. controls (1.7,-0.1).. (1.8,0)--(1.8,0.45);\draw[black, thick] (0.8,0.5) -- (0.8,0) .. controls (0.9,-0.1).. (1.0,0)--(1.0,0.45);\node at (0.2,-0.2){$A_{\alpha}$};\draw[black, thick] (3,0) .. controls (3.2,0) and (3.2,0.10) .. (3.1,0.20);} 
\label{eq:M2A}
\end{align}
The matrix product state representation of the 1D state satisfies the `injectivity' condition that its canonical form contains only one diagonal block. Moreover, the transfer matrix 
\begin{align}
\mathbb{T}_{\alpha} = \tikz[baseline=-3pt]{\draw[black, thick] (-0.5,-0.25) -- (0.5,-0.25); \draw[black, thick] (-0.5,0.25) -- (0.5,0.25);\draw[black, thick] (0,-0.5) -- (0,0.5); \draw[black, thick] (-0.2,-0.5) -- (-0.2,0.5);\draw[black,thick] (0,-0.5)..controls (-0.1,-0.6)..(-0.2,-0.5);\draw[black,thick] (0,0.5)..controls (-0.1,0.6)..(-0.2,0.5);\node at (0.3,-0.45){$A_{\alpha}$}; \node at (0.3,0.45){$A^{\dagger}_{\alpha}$};}
\end{align}
has a nondegenerate largest eigenvalue ($\lambda_{\alpha}>0$) and the corresponding eigenvector takes the form of $\sum_i |ii\rangle$ on the two virtual indices (represented by the square braket in the equation below). 
\begin{align}
\mathbb{T}_{\alpha} \left(\sum_{i}|ii\rangle\right) = \tikz[baseline=-3pt]{\draw[black, thick] (-0.5,-0.25) -- (0.5,-0.25); \draw[black, thick] (-0.5,0.25) -- (0.5,0.25);\draw[black, thick] (0,-0.5) -- (0,0.5); \draw[black, thick] (-0.2,-0.5) -- (-0.2,0.5);\draw[black,thick] (0,-0.5)..controls (-0.1,-0.6)..(-0.2,-0.5);\draw[black,thick] (0,0.5)..controls (-0.1,0.6)..(-0.2,0.5);\draw[black,thick] (0.55,0.25) -- (0.65,0.25)--(0.65,-0.25)--(0.55,-0.25)} = \lambda_{\alpha}\tikz[baseline=-3pt]{\draw[black,thick] (0.02,0.25) -- (0.12,0.25)--(0.12,-0.25)--(0.02,-0.25)}
\label{eq:dt}
\end{align}
Note that here, we do not rescale $\lambda_{\alpha}$ to $1$ as is usually done for matrix product states because the overall normalization of the MPO matters. We will see that because the MPO comes from a sequential circuit, $\lambda_{\alpha} = d$, where $d$ is the dimension of the physical index.

On-site symmetries of the tensor product form are obviously short range correlated with correlation length $\xi_{\alpha} = 0$. The Kramers-Wannier operator $\mathcal{D}$ is also short-range correlated with correlation length $\xi_{\alpha}=0$. In fact, the corresponding short ranged 1D state is just the 1D cluster state.

\begin{claim}
The fusion of two simple symmetries can result in the sum of more than one simple symmetry, each with a non-negative integer coefficient.
\label{claim2}
\end{claim}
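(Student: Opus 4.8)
The plan is to argue entirely at the level of the MPO tensors. Let $\mathcal{D}_{\alpha}$ and $\mathcal{D}_{\beta}$ be simple symmetries, represented by injective sMPO tensors $M_{\alpha}$ and $M_{\beta}$ with bond dimensions $D_{\alpha}$ and $D_{\beta}$. First I would form the composition $\mathcal{D}_{\beta}\times\mathcal{D}_{\alpha}$ by stacking the two MPOs: contract the output physical index of $M_{\alpha}$ with the input physical index of $M_{\beta}$, producing a single site tensor $M_{\gamma}$ whose virtual space is $\mathbb{C}^{D_{\alpha}}\otimes\mathbb{C}^{D_{\beta}}$. This is manifestly still a matrix product operator, and by Claim~\ref{claim1} it is in fact an sMPO: the two sequential circuits combine into one, followed by the joint projection on the two ancilla pairs. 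Generically $M_{\gamma}$ fails to be injective, which is where the real work lies.

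Next I would invoke the canonical form of matrix product operators --- the MPS canonical form reviewed in Appendix~\ref{app:MPS} applied to the tensor obtained by fusing the input and output legs of $M_{\gamma}$ into one. After a gauge transformation on the virtual space, $M_{\gamma}$ becomes block upper triangular, with diagonal blocks of the form $M_{\gamma_{a}}\otimes I_{m_{a}}$ where the $M_{\gamma_{a}}$ are injective and pairwise inequivalent and each $m_{a}$ is a positive integer multiplicity. On a periodic chain the strictly block-upper-triangular part drops out of the trace over the virtual indices --- every monomial of the transfer operator that uses an off-diagonal piece moves strictly upward in the block ordering and therefore has vanishing trace --- so the composition is exactly $\mathcal{D}_{\beta}\times\mathcal{D}_{\alpha}=\sum_{a}m_{a}\,\widetilde{\mathcal{D}}_{\gamma_{a}}$, where $\widetilde{\mathcal{D}}_{\gamma_{a}}$ is the injective MPO built from $M_{\gamma_{a}}$. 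Each $\widetilde{\mathcal{D}}_{\gamma_{a}}$ is injective, hence short-range correlated, hence simple, the $m_{a}$ are non-negative integers, and more than one $a$ can occur, as the fusion $\mathcal{D}^{\dagger}\times\mathcal{D}=I+\eta$ of Section~\ref{sec:KW} already demonstrates. That is the content of the claim.

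The step I expect to be the main obstacle is the bookkeeping of normalizations, so that the multiplicities $m_{a}$ are genuinely the fusion coefficients rather than rescaled weights. Each block transfer matrix $\mathbb{T}_{\gamma_{a}}$ has a nondegenerate top eigenvalue $\lambda_{\gamma_{a}}$ with eigenvector $\sum_{i}|ii\rangle$, and $\widetilde{\mathcal{D}}_{\gamma_{a}}$ equals the unit-normalized simple symmetry precisely when $\lambda_{\gamma_{a}}=d$, the universal value that the sequential-circuit origin forces on every sMPO. Because of the block-triangular structure, the top eigenvalue of the full transfer matrix $\mathbb{T}_{\gamma}$ is $\max_{a}\lambda_{\gamma_{a}}$ (mixed off-diagonal transfer matrices between inequivalent injective blocks have strictly smaller spectral radius), and since $M_{\gamma}$ is an sMPO this maximum equals $d$. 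Hence no block exceeds $d$; the blocks with $\lambda_{\gamma_{a}}=d$ furnish the simple summands with integer coefficients $m_{a}$, while any block with $\lambda_{\gamma_{a}}<d$ contributes an operator of Hilbert--Schmidt norm suppressed by $(\lambda_{\gamma_{a}}/d)^{N/2}$ relative to the dominant ones and therefore decouples in the thermodynamic limit. (As an independent check one can bound $\|\mathcal{D}_{\beta}\|_{\mathrm{op}}\le\sqrt{d_{\beta}}$ directly from the circuit-plus-projection implementation and combine it with $\|\mathcal{D}_{\alpha}\|_{\mathrm{HS}}^{2}=\mathrm{tr}\,\mathbb{T}_{\alpha}^{N}\sim d^{N}$ to get $\lambda_{\gamma_{a}}\le d$.) A smaller residual point is to confirm that each surviving injective block is itself realizable as an sMPO --- a bona fide simple symmetry and not merely an abstract injective MPO --- which I expect follows from the identification of simple symmetries with injective sMPOs used throughout this section, but which deserves an explicit argument, e.g.\ by reading off the sequential circuit and endpoint projection from the block tensor $M_{\gamma_{a}}$ itself.
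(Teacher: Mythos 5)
Your route is the same as the paper's: stack the two MPO tensors, pass to the canonical form of Appendix~\ref{app:MPS}, identify the diagonal blocks with the simple summands, note that the strictly upper\--triangular part drops out of the trace on a periodic chain, and attribute integer multiplicities to repeated blocks. The difficulty is that the step you defer as a ``residual point'' --- showing that each injective diagonal block is itself realizable as a sequential circuit plus endpoint non-unitaries, i.e.\ a genuine sMPO and hence a simple \emph{symmetry} rather than an abstract injective MPO --- is essentially the entire content of the paper's proof. The argument is the one you gesture at but do not carry out: keep the combined sequential circuit from the proof of Claim~\ref{claim1} unchanged and modify only the final step, projecting the two ancilla pairs onto the maximally entangled state supported on the virtual subspace $V_{\gamma}$ carrying the block $M_{\gamma}$. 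This projection picks out exactly that diagonal block of the composed tensor, exhibiting $\mathcal{D}_{\gamma}$ as circuit\--plus\--endpoint\--non-unitaries. You should make this explicit; without it the claim that each summand is a ``simple symmetry'' is unsupported.

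The second divergence is in the normalization. You allow for diagonal blocks with transfer eigenvalue $\lambda_{\gamma_a}<d$ and propose to discard them because they are exponentially suppressed in the thermodynamic limit. The paper discards nothing: once each block is known to be an sMPO, its operator norm can differ from that of a unitary only by an $N$-independent constant, which forbids any rescaling $|\lambda_{\gamma}|\neq 1$ of the canonical-form block (such a prefactor would change the norm by $\lambda_{\gamma}^{N}$); consequently every diagonal block enters with coefficient exactly $1$, its spectral radius is pinned to $d$, and multiplicities $N^{\gamma}_{\alpha\beta}>1$ arise only from distinct blocks representing the same operator. This matters for the statement being proved: the fusion coefficients are exact non-negative integers at finite size, so an argument that only controls subleading blocks asymptotically would not establish the clean decomposition $\mathcal{D}_{\beta}\times\mathcal{D}_{\alpha}=\sum_{\gamma}N^{\gamma}_{\alpha\beta}\,\mathcal{D}_{\gamma}$.
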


The fusion of two `simple' symmetry operators may not be simple any more.  The Kramers-Wannier transformation is a typical example. The fusion rule $\mathcal{D}^{\dagger} \times \mathcal{D} = 1+\eta$ is manifested in the MPO representation as
\begin{equation}
\mathbb{M} = \tikz[baseline=-3pt]{\draw[black, thick] (-0.5,-0.25) -- (0.5,-0.25); \draw[black, thick] (-0.5,0.25) -- (0.5,0.25);\draw[black, thick] (0,-0.7) -- (0,0.7); \node at (0.2,-0.4){$M$};\node at (0.25,0.45){$M^{\dagger}$};} = \tikz[baseline=-3pt]{\draw[black, thick] (-0.4,-0.25) -- (-0.2,-0.25) -- (-0.2,0.25) -- (-0.4,0.25); \draw[black, thick] (0.4,0.25) -- (0.2,0.25) -- (0.2,-0.25) -- (0.4,-0.25);\draw[black, thick] (0,-0.7) -- (0,0.7); \node at (-0.4,0) {$v_0$}; \node at (0.4,0) {$v_0$}; \node at (0.2,0.6) {$I$};} + \tikz[baseline=-3pt]{\draw[black, thick] (-0.4,-0.25) -- (-0.2,-0.25) -- (-0.2,0.25) -- (-0.4,0.25); \draw[black, thick] (0.4,0.25) -- (0.2,0.25) -- (0.2,-0.25) -- (0.4,-0.25);\draw[black, thick] (0,-0.7) -- (0,0.7);\node at (-0.4,0) {$v_1$}; \node at (0.4,0) {$v_1$}; \node at (0.2,0.6) {$X$}; } + ...
\end{equation}
where only the diagonal blocks are shown and all the off-diagonal blocks in the decomposition are omitted (...). We see that the fused MPO is not injective, but rather contains two diagonal blocks, corresponding to the $I$ and $\eta$ operators in the fusion result. In general, we have
\begin{equation}
\mathbb{M}_{\alpha,\beta} = \tikz[baseline=-3pt]{\draw[black, thick] (-0.5,-0.25) -- (0.5,-0.25); \draw[black, thick] (-0.5,0.25) -- (0.5,0.25);\draw[black, thick] (0,-0.7) -- (0,0.7); \node at (0.3,-0.45){$M_{\alpha}$};\node at (0.3,0.45){$M_{\beta}$};} = \sum_{\gamma} \tikz[baseline=-3pt]{\draw[black, thick] (-0.5,-0.25) -- (0.5,-0.25); \draw[black, thick] (-0.5,0.25) -- (0.5,0.25);\draw[black, thick] (0,-0.7) -- (0,0.7); \node at (0.3,0){$M_{\gamma}$};} + ...
\label{eq:FusionMPO}
\end{equation}
where the right hand side of the second $=$ sign gives the canonical decomposition of the tensor $\mathbb{M}_{\alpha,\beta}$. Only the diagonal blocks are shown and all the off-diagonal blocks in the decomposition are omitted (...). The properties of the canonical decomposition tell us that the $M_{\gamma}$ tensors are each supported on orthogonal spaces $V_{\gamma}$ of the virtual indices $V_{\alpha} \otimes V_\beta$, and their represented operators $\mathcal{D}_{\gamma}$ are short-range correlated.

We will show, based on this decomposition, that each $M_{\gamma}$ represents a sequential matrix product operator. That is, the corresponding 1D operator $\mathcal{D}_{\gamma}$ can be implemented using a sequential circuit (with ancillas) plus non-unitary operations at the end. Therefore, each $\mathcal{D}_{\gamma}$ is a generalized symmetry and the fusion of two simple symmetries can result in a sum of simple symmetries, $\mathcal{D}_{\beta} \times \mathcal{D}_{\alpha} = \sum_{\gamma} \mathcal{D}_{\gamma}$. Different $\gamma$ can give rise to the same generalized symmetry operator. When that happens, the corresponding symmetry has a nontrivial positive integer multiplicity $N^{\gamma}_{\alpha\beta}$ in the fusion result. 
\begin{equation}
\mathcal{D}_{\beta} \times \mathcal{D}_{\alpha} = \sum_{\gamma} N_{\alpha\beta}^{\gamma} \mathcal{D}_{\gamma}, \ \ N^{\gamma}_{\alpha\beta} \in \mathbb{Z}_{\geq 0}
\end{equation}

\begin{proof}
To show that $\mathcal{D}_{\gamma}$ can be implemented using a sequential circuit (with ancillas) plus non-unitary operations at the end, we just need to modify slightly the final step in the implementation of $\mathcal{D}_{\beta}\times \mathcal{D}_{\alpha}$. We start again with two pairs of ancillas with dimension $d_{\alpha}$ and $d_{\beta}$ respectively and initialized in their maximally entangled state $\frac{1}{\sqrt{d_{\alpha}}}\sum_{i=1}^{d_{\alpha}} |ii\rangle$ and $\frac{1}{\sqrt{d_{\beta}}}\sum_{j=1}^{d_{\beta}} |jj\rangle$. Then, we apply the sequential circuit part of $\mathcal{D}_{\alpha}$ and $\mathcal{D}_{\beta}$ in a parallel way such that they combine into a single sequential circuit, as discussed in the proof of Claim \ref{claim:algebra}. At the last step, we take the ancilla's for both $\alpha$ and $\beta$ and project them into the maximally entangled state in $V_{\gamma}$, the subspace of virtual indices that $\lambda_{\gamma}M_{\gamma}$ is supported on. Such a projection allows us to pick out the $\lambda_{\gamma}M_{\gamma}$ component in the composed operation of $M_{\alpha}$ and $M_{\beta}$. Therefore, each $\lambda_{\gamma}M_{\gamma}$ is an sMPO and each $\mathcal{D}_{\gamma}$ can be implemented with a sequential circuit plus non-unitary operations at the end.

The fact that each $M_{\gamma}$ is an sMPO fixes their normalization. It is not possible to add a prefactor $|\lambda_{\gamma}| \neq 1$ to $M_{\gamma}$ while maintaining the sMPO properly. This is because a prefactor $|\lambda_{\gamma}| \neq 1$ changes the norm of the resulting operator $\mathcal{D}_{\gamma}$ by $\lambda_{\gamma}^N$, where $N$ is the length of the 1d chain. On the other hand, $\mathcal{D}_{\gamma}$ can be implemented as a sequential circuit with non-unitary operations only at the end, indicating that the norm of $\mathcal{D}_{\gamma}$ can only differ from that of a unitary operator by a constant factor which does not scale with the system size. Therefore, the normalization of each $M_{\gamma}$ is fixed and there can be no prefactor $|\lambda_{\gamma}| \neq 1$. In the proof of claim~\ref{claim3}, we will see that the normalization of $M_{\gamma}$ is such that their spectral radius is $d$, the dimensional of each physical index.

With the normalization of $\mathcal{D}_{\gamma}$ fixed by that given by $M_{\gamma}$, the prefactor in front of each $\mathcal{D}_{\gamma}$ in $\mathcal{D}_{\beta} \times \mathcal{D}_{\alpha} = \sum_{\gamma} \mathcal{D}_{\gamma}$ is fixed to be $1$ in this decomposition. Multiplicity $N^{\gamma}_{\alpha\beta}>1$ can only come due to different $\gamma$ blocks representing the same symmetry operator. 
\end{proof}

\begin{claim}
The result of the fusion of a simple symmetry with its Hermitian conjugate contains one and only one summand that is identity.
\label{claim3}
\end{claim}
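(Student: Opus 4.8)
The plan is to determine the multiplicity $m_0$ of the identity in $\mathcal{D}_\alpha^\dagger\times\mathcal{D}_\alpha=\sum_\gamma N^\gamma\mathcal{D}_\gamma$ by computing the Hilbert--Schmidt norm $\mathrm{Tr}(\mathcal{D}_\alpha^\dagger\mathcal{D}_\alpha)$ in two ways and comparing the leading large-$N$ behavior. On one side, $\mathrm{Tr}(\mathcal{D}_\alpha^\dagger\mathcal{D}_\alpha)=\mathrm{Tr}(\mathbb{T}_\alpha^N)=\sum_j\lambda_{\alpha,j}^N$; since $\mathbb{T}_\alpha$ has (as recalled around Eq.~\eqref{eq:dt}) a real, positive, nondegenerate largest eigenvalue $\lambda_\alpha$ dominating all others in magnitude, this equals $\lambda_\alpha^N$ up to an exponentially suppressed correction. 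On the other side, expanding into simple blocks, $\mathrm{Tr}(\mathcal{D}_\alpha^\dagger\mathcal{D}_\alpha)=\sum_\gamma N^\gamma\,\mathrm{Tr}(\mathcal{D}_\gamma)$, and $\mathrm{Tr}(\mathcal{D}_\gamma)=\mathrm{Tr}(E_\gamma^N)$ where $E_\gamma=\sum_a M_\gamma[a,a]$ is the ``trace transfer matrix'' of block $\gamma$ (operator trace collapses the upper and lower physical index). So the whole statement reduces to: (i) every simple sMPO has $\lambda_\gamma=d$, and (ii) the only simple $\mathcal{D}_\gamma$ with $|\mathrm{Tr}(\mathcal{D}_\gamma)|$ growing as fast as $d^N$ is the identity.

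Step (i) is the normalization promised in the proof of Claim~\ref{claim2}. The bound $\lambda_\gamma\le d$ is cheap: every sMPO has the form $\mathcal{D}_\gamma=(\mathrm{id}\otimes\langle\chi|)\,U\,(\mathrm{id}\otimes|\chi\rangle)$ with $U$ a unitary on the physical chain plus a finite-dimensional ancilla and $|\chi\rangle$ a fixed ancilla state, so $\|\mathcal{D}_\gamma\|\le\||\chi\rangle\|^{2}=O(1)$ in $N$, hence $\mathrm{Tr}(\mathcal{D}_\gamma^\dagger\mathcal{D}_\gamma)\le\|\mathcal{D}_\gamma\|^{2}d^{N}$ and $\lambda_\gamma\le d$ after taking $N$-th roots. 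For the exact value I would use that the MPO comes from a sequential circuit: writing each two-body gate in operator--Schmidt form and assembling the bulk tensor, the gate unitarity relations $uu^\dagger=u^\dagger u=\mathbbm{1}$ translate (possibly after a gauge transformation of $M_\gamma$) into $\sum_{a,b}M_\gamma[a,b]M_\gamma[a,b]^\dagger=\sum_{a,b}M_\gamma[a,b]^\dagger M_\gamma[a,b]=d\,\mathbbm{1}$ on the injective support, i.e.\ $\mathbb{T}_\gamma/d$ is a unital, trace-preserving quantum channel; its nondegenerate Perron eigenvector is then $\sum_i|ii\rangle=\mathbbm{1}$ with eigenvalue exactly $d$. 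I expect this translation --- deriving both bulk-tensor normalizations from gate unitarity and checking that the boundary ancillas and projections do not spoil it --- to be the main obstacle, since a sequential circuit need not preserve locality and the matrix-product-unitary structure theorems do not apply off the shelf.

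For step (ii), I would bound $E_\gamma$ by Cauchy--Schwarz together with the left normalization: for any unit bond vector $v$, $\|E_\gamma v\|\le\sqrt{d}\,\bigl(\sum_a\|M_\gamma[a,a]v\|^{2}\bigr)^{1/2}\le\sqrt{d}\,\bigl(\sum_{a,b}\|M_\gamma[a,b]v\|^{2}\bigr)^{1/2}=\sqrt{d}\,\langle v|d\,\mathbbm{1}|v\rangle^{1/2}=d$, so the spectral radius of $E_\gamma$ is at most $d$. If it equals $d$, tracing the equality cases of the triangle inequality and Cauchy--Schwarz forces the corresponding eigenvector to satisfy $M_\gamma[a,b]v=e^{i\theta}\delta_{ab}v$; then $|v\rangle\langle v|$ is a fixed point of $\mathbb{T}_\gamma$ at eigenvalue $d$, so by nondegeneracy $v$ spans the whole (hence one-dimensional) bond space and $\mathcal{D}_\gamma=e^{iN\theta}I^{\otimes N}$. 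Consequently $\mathrm{Tr}(\mathcal{D}_\gamma)=d^{N}$ when $\mathcal{D}_\gamma=I^{\otimes N}$, equals $e^{iN\theta}d^{N}$ for the spurious ``phase'' blocks, and is $O\bigl((d-\epsilon)^{N}\bigr)$ otherwise.

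Combining: $\mathrm{Tr}(\mathcal{D}_\alpha^\dagger\mathcal{D}_\alpha)=d^{N}+o(d^{N})$ from step (i), while from step (ii) it is $m_0 d^{N}+\bigl(\sum_{\theta\neq0}m_\theta e^{iN\theta}\bigr)d^{N}+o(d^{N})$ with $m_\theta\ge0$ the multiplicities of the $e^{iN\theta}I^{\otimes N}$ blocks. Because $\mathcal{D}_\alpha^\dagger\mathcal{D}_\alpha$ is positive, its trace is real and nonnegative for every $N$, which forces the almost-periodic sum $\sum_{\theta\neq0}m_\theta e^{iN\theta}$ to vanish identically (or, more simply, one regards symmetry operators as defined up to an overall phase, so those blocks are just identified with the identity); matching the coefficient of $d^{N}$ then gives $m_0=1$. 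In particular $m_0\ge1$, so the identity does occur, and exactly once, which is the claim. As a byproduct this fixes $\lambda_\gamma=d$ for all simple sMPOs, as used in the discussion around Claim~\ref{claim2}.
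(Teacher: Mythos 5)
Your overall strategy is genuinely different from the paper's and, for the uniqueness half, arguably cleaner. The paper works block by block: it first argues an identity block must exist because the periodic MPO is built from the bulk tensor of a unitary circuit, so with a suitable \emph{open} boundary condition $\mathbb{M}_{\alpha}$ must reproduce $I^{\otimes N}$ for arbitrary $N$, which is impossible if identity blocks appear only in the finite upper-triangular part of the canonical form; that diagonal identity block then supplies an eigenvector of $\mathbb{T}_{\alpha}$ with eigenvalue exactly $d$, and uniqueness follows because $d$ is also an upper bound (via the same trace bound $\mathrm{Tr}(\mathcal{D}_{\alpha}^{\dagger}\mathcal{D}_{\alpha})\leq c_{\alpha}d^{N}$ you use) and the Perron eigenvector of an injective transfer matrix is nondegenerate. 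Your replacement of the uniqueness step by the Cauchy--Schwarz bound on $E_{\gamma}=\sum_a M_{\gamma}[a,a]$, with the equality analysis forcing $\mathcal{D}_{\gamma}=e^{iN\theta}I^{\otimes N}$, is correct and handles existence and multiplicity in one stroke via trace counting; the positivity/Ces\`aro argument disposing of the phase blocks is also fine.

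The genuine gap is exactly the one you flag: your entire comparison collapses unless $\lambda_{\alpha}=d$ on the nose, and the route you propose for the lower bound --- reading $\sum_{a,b}M[a,b]M[a,b]^{\dagger}=d\,\mathbbm{1}$ off gate unitarity --- does not go through as directly as stated. Writing a gate in operator--Schmidt form $u=\sum_k L_k\otimes R_k$ with Hilbert--Schmidt--orthogonal factors, unitarity gives only the \emph{weighted} completeness relation $\sum_k\|L_k\|_{\mathrm{HS}}^{2}\,R_kR_k^{\dagger}=d\,\mathbbm{1}$ (and its mirror), and since the operator--Schmidt coefficients of a generic gate are unequal, this does not yield $\sum_k R_kR_k^{\dagger}\propto\mathbbm{1}$; moreover the statement ``$\sum MM^{\dagger}=d\,\mathbbm{1}$ in some gauge'' is, for an injective tensor, literally equivalent to ``$\rho(\mathbb{T}_{\alpha})=d$'', so the proposed translation risks being circular rather than a derivation. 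The paper closes this hole without touching gate Schmidt data: the open-boundary argument above forces the diagonal identity block, whose normalization is fixed (it must itself be an sMPO, hence carries no extensive prefactor), and Eq.~\ref{eq:dt1} then exhibits $v_0$ as an eigenvector of $\mathbb{T}_{\alpha}$ with eigenvalue $d$. If you import that single ingredient (or, equivalently, note that contracting $\mathbb{T}_{\alpha}^{N}$ against the circuit's natural boundary vectors computes $\mathrm{Tr}(U_{\alpha}^{\dagger}U_{\alpha})=d^{N}$ exactly, forcing $\rho(\mathbb{T}_{\alpha})\geq d$), the rest of your proof is complete and constitutes a valid alternative to the paper's.
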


Suppose that the simple symmetry $\mathcal{D}_{\alpha}$ is represented by an injective MPO $M_{\alpha}$. The fusion of $\mathcal{D}_{\alpha}^{\dagger}$ and $\mathcal{D}_{\alpha}$ is represented by the tensor
\begin{equation}
\mathbb{M}_{\alpha} = \tikz[baseline=-3pt]{\draw[black, thick] (-0.5,-0.25) -- (0.5,-0.25); \draw[black, thick] (-0.5,0.25) -- (0.5,0.25);\draw[black, thick] (0,-0.7) -- (0,0.7); \node at (0.25,-0.45){$M_{\alpha}$};\node at (0.25,0.45){$M^{\dagger}_{\alpha}$};} 
\label{eq:Ma}
\end{equation}
To prove this claim, we will show that in the canonical decomposition of $\mathbb{M}_{\alpha}$, there is one and only one diagonal block with the physical indices in the $I$ matrix form
\begin{equation}
\mathbb{M}_{\alpha} = \tikz[baseline=-3pt]{\draw[black, thick] (-0.5,-0.25) -- (0.5,-0.25); \draw[black, thick] (-0.5,0.25) -- (0.5,0.25);\draw[black, thick] (0,-0.7) -- (0,0.7); \node at (0.2,-0.4){$M_{\alpha}$};\node at (0.25,0.45){$M^{\dagger}_{\alpha}$};} = \tikz[baseline=-3pt]{\draw[black, thick] (-0.4,-0.25) -- (-0.2,-0.25) -- (-0.2,0.25) -- (-0.4,0.25); \draw[black, thick] (0.4,0.25) -- (0.2,0.25) -- (0.2,-0.25) -- (0.4,-0.25);\draw[black, thick] (0,-0.7) -- (0,0.7); \node at (-0.4,0) {$v_0$}; \node at (0.4,0) {$v_0$}; \node at (0.2,0.6) {$I$};} +  ...
\label{eq:Ma1}
\end{equation}
... includes all the other diagonal blocks as well as off-diagonal blocks in the decomposition. We will call a block having the physical indices in the $I$ matrix form an `identity block' for short. 

A direct consequence of this claim is that, 1D non-invertible symmetries are always non-invertible in the weaker sense because when $\mathcal{D}_{\alpha}^{\dagger}$ is fused with $\mathcal{D}_{\alpha}$, one and only one of the fusion channels is identity $I$.
\begin{equation}
\mathcal{D}^{\dagger}_{\alpha} \times \mathcal{D}_{\alpha} = I + ...
\label{eq:1Dnoninvert}
\end{equation}
That is, they are annihilable.

\begin{proof}
First, we argue that there has to be at least one identity block on the diagonal. The MPO representation of the $\mathcal{D}_{\alpha}$ symmetry comes from that of a sequential unitary circuit. When we impose a periodic boundary condition on $M_{\alpha}$, we get $\mathcal{D}_{\alpha}$. But with a properly chosen open boundary condition, we should be able to recover the unitary sequential circuit. Correspondingly, with a properly chosen open boundary condition, $\mathbb{M}_{\alpha}$ should represent the identity operator $I\otimes ...\otimes I$ on a chain of arbitrary length. This is possible only if at least one diagonal block in the canonical decomposition of $\mathbb{M}_{\alpha}$ is the identity block. If identity blocks only show up on the off-diagonal, since all the off-diagonal blocks are in the upper triangle region (see Eq.~\ref{eq:MPScano}) and there is a finite number of them, it is hence not possible to generate $I\otimes ...\otimes I$ on a chain of arbitrary length, even with open boundary conditions.

Consider the decomposition shown in Eq.~\ref{eq:Ma1} with one diagonal identity block. The diagonal identity block gives rise to a $I\otimes...\otimes I$ term in the fusion of $\mathcal{D}^{\dagger}_{\alpha}$ and $\mathcal{D}_{\alpha}$. Since $I\otimes...\otimes I$ is a tensor product operator, this diagonal block has a virtual index of dimension one, supported on the vector $v_0$. Note that, since we proved in Claim~\ref{claim2} that each diagonal component in this decomposition can be implemented with a sequential circuit (which is trivially true for $I\otimes...\otimes I$), the normalization of this term is fixed. $v_0$ is a normalized vector (with length $1$) and there can be no prefactor in front of this term. Now, let's contract the two physical indices of $\mathbb{M}_{\alpha}$ and obtain the transfer matrix $\mathbb{T}_{\alpha}$ of $A_{\alpha}$. We see that $v_0$ is an eigenvector of $\mathbb{T}_{\alpha}$ with eigenvalue $d$, the dimension of the physical index. 
\begin{align}
\mathbb{T}_{\alpha} |v_0\rangle = \tikz[baseline=-3pt]{\draw[black, thick] (-0.5,-0.25) -- (0.5,-0.25); \draw[black, thick] (-0.5,0.25) -- (0.5,0.25);\draw[black, thick] (0,-0.5) -- (0,0.5); \draw[black, thick] (-0.2,-0.5) -- (-0.2,0.5);\draw[black,thick] (0,-0.5)..controls (-0.1,-0.6)..(-0.2,-0.5);\draw[black,thick] (0,0.5)..controls (-0.1,0.6)..(-0.2,0.5);\draw[black,thick] (0.7,0.25) -- (0.8,0.25)--(0.8,-0.25)--(0.7,-0.25); \node at (0.25,-0.45){$A_{\alpha}$};\node at (0.25,0.45){$A^{\dagger}_{\alpha}$}; \node at (0.6,0) {$v_0$}; } = d\tikz[baseline=-3pt]{\draw[black,thick] (0.3,0.25) -- (0.4,0.25)--(0.4,-0.25)--(0.3,-0.25);\node at (0.2,0) {$v_0$};}
\label{eq:dt1}
\end{align}
Note that only the first term (the term shown in Eq.~\ref{eq:Ma1}) contributes to the action of $\mathbb{T}_{\alpha}$ on $|v_0\rangle$. The action of all the other terms (the ones not shown) on $|v_0\rangle$ gives zero due to the upper triangular structure of the canonical form (see Eq.~\ref{eq:MPScano}).

If there is a second identity block on the diagonal, we would have another eigenvector $|v_1\rangle$ of $\mathbb{T}_{\alpha}$ with eigenvalue $d$. However, this is not possible. We prove this by showing that $d$ is the largest (in magnitude) eigenvalue of $\mathbb{T}_{\alpha}$ and hence for an injective MPO $M_{\alpha}$, the corresponding eigenvector $|v_0\rangle$ is unique.

To see that $d$ is the largest eigenvalue of $\mathbb{T}_{\alpha}$, we note that since $\mathcal{D}_{\alpha}^{\dagger}\times \mathcal{D}_{\alpha}$ differs from $U_{\alpha}^{\dagger}U_{\alpha} = I\otimes ... \otimes I$ only by non-unitary operations at the boundary, the trace of the operator is bounded by that of $I\otimes ... \otimes I$ up to a constant factor, say $c_{\alpha}$
\begin{equation}
\text{Tr}\left(\mathcal{D}_{\alpha}^{\dagger}\times \mathcal{D}_{\alpha}\right) \leq c_{\alpha} \text{Tr}\left(I\otimes ...\otimes I\right)=c_{\alpha} d^N
\end{equation}
where $d$ is the dimension of each physical index. On the other hand, the trace of $\mathcal{D}_{\alpha}^{\dagger}\times \mathcal{D}_{\alpha}$ is equal to the contraction of copies of $\mathbb{T}_{\alpha}$ along the chain
\begin{align}
\tikz[baseline=0pt]{\draw[black, thick] (-0.5,-0.25) -- (0.5,-0.25); \draw[black, thick] (-0.5,0.25) -- (0.5,0.25);\draw[black, thick] (0,-0.5) -- (0,0.5); \draw[black, thick] (-0.2,-0.5) -- (-0.2,0.5);\draw[black,thick] (0,-0.5)..controls (-0.1,-0.6)..(-0.2,-0.5);\draw[black,thick] (0,0.5)..controls (-0.1,0.6)..(-0.2,0.5);\draw[black, thick] (-0.5,0.25) .. controls (-0.7,0.25) and (-0.7,0.35) .. (-0.6,0.45);\draw[black, thick] (-0.5,-0.25) .. controls (-0.7,-0.25) and (-0.7,-0.35) .. (-0.6,-0.45);}\tikz[baseline=0pt]{\draw[black, thick] (-0.5,-0.25) -- (0.5,-0.25); \draw[black, thick] (-0.5,0.25) -- (0.5,0.25);\draw[black, thick] (0,-0.5) -- (0,0.5); \draw[black, thick] (-0.2,-0.5) -- (-0.2,0.5);\draw[black,thick] (0,-0.5)..controls (-0.1,-0.6)..(-0.2,-0.5);\draw[black,thick] (0,0.5)..controls (-0.1,0.6)..(-0.2,0.5);\node at (0.2,0){$\mathbb{T}_{\alpha}$};}\tikz[baseline=0pt]{\draw[black, thick] (-0.5,-0.25) -- (0.5,-0.25); \draw[black, thick] (-0.5,0.25) -- (0.5,0.25);\draw[black, thick] (0,-0.5) -- (0,0.5); \draw[black, thick] (-0.2,-0.5) -- (-0.2,0.5);\draw[black,thick] (0,-0.5)..controls (-0.1,-0.6)..(-0.2,-0.5);\draw[black,thick] (0,0.5)..controls (-0.1,0.6)..(-0.2,0.5);}...\tikz[baseline=0pt]{\draw[black, thick] (-0.5,-0.25) -- (0.5,-0.25); \draw[black, thick] (-0.5,0.25) -- (0.5,0.25);\draw[black, thick] (0,-0.5) -- (0,0.5); \draw[black, thick] (-0.2,-0.5) -- (-0.2,0.5);\draw[black,thick] (0,-0.5)..controls (-0.1,-0.6)..(-0.2,-0.5);\draw[black,thick] (0,0.5)..controls (-0.1,0.6)..(-0.2,0.5);}\tikz[baseline=0pt]{\draw[black, thick] (-0.5,-0.25) -- (0.5,-0.25); \draw[black, thick] (-0.5,0.25) -- (0.5,0.25);\draw[black, thick] (0,-0.5) -- (0,0.5); \draw[black, thick] (-0.2,-0.5) -- (-0.2,0.5);\draw[black,thick] (0,-0.5)..controls (-0.1,-0.6)..(-0.2,-0.5);\draw[black,thick] (0,0.5)..controls (-0.1,0.6)..(-0.2,0.5);}\tikz[baseline=0pt]{\draw[black, thick] (-0.5,-0.25) -- (0.5,-0.25); \draw[black, thick] (-0.5,0.25) -- (0.5,0.25);\draw[black, thick] (0,-0.5) -- (0,0.5); \draw[black, thick] (-0.2,-0.5) -- (-0.2,0.5);\draw[black,thick] (0,-0.5)..controls (-0.1,-0.6)..(-0.2,-0.5);\draw[black,thick] (0,0.5)..controls (-0.1,0.6)..(-0.2,0.5);\draw[black, thick] (0.5,0.25) .. controls (0.7,0.25) and (0.7,0.35) .. (0.6,0.45);\draw[black, thick] (0.5,-0.25) .. controls (0.7,-0.25) and (0.7,-0.35) .. (0.6,-0.45);}
\label{eq:Ta}
\end{align}
Therefore, the largest eigenvalue of $\mathbb{T}_{\alpha}$ has to be $d$. 
\end{proof}
From the above discussion, we can find a natural normalization of the injective sMPO $M_{\alpha}$: its spectral radius is equal to $d$.

\begin{claim}
The fusion of the Hermitian conjugate of a symmetry with a different symmetry cannot contain the identity channel.
\label{claim4}
\end{claim}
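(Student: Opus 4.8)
The plan is to argue by contraposition: assuming $\mathcal{D}_\alpha$ and $\mathcal{D}_\beta$ are two \emph{distinct} simple symmetries, I want to show that the identity cannot appear in $\mathcal{D}_\alpha^\dagger\times\mathcal{D}_\beta$. First note that $\mathcal{D}_\alpha^\dagger$ is itself a simple sMPO -- its sequential circuit is the reversed, conjugated circuit of $\mathcal{D}_\alpha$, and its endpoint operations are the conjugates -- so Claims~\ref{claim:algebra} and \ref{claim2} apply to $\mathcal{D}_\alpha^\dagger\times\mathcal{D}_\beta$: the canonical decomposition of the fused tensor is a sum of simple symmetries $\mathcal{D}_\gamma$, each normalized so that the transfer matrix $\mathbb{T}_\gamma$ of its merged-leg tensor (input and output physical indices combined at each site) has spectral radius $d$ and, by the proof of Claim~\ref{claim3}, a \emph{nondegenerate} leading eigenvalue $d$. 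The one external ingredient I will use is the injective matrix-product overlap bound recalled in Appendix~\ref{app:MPS}: for injective tensors $A,B$ the mixed transfer matrix obeys $\rho(\mathbb{E}_{AB})^2\le\rho(\mathbb{T}_A)\,\rho(\mathbb{T}_B)$, with equality \emph{iff} $B_s=c\,X A_s X^{-1}$ for some invertible gauge $X$ and scalar $c$.

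Next I would phrase everything through Hilbert--Schmidt overlaps of vectorized operators. Write $\langle\!\langle\mathcal{O}_1|\mathcal{O}_2\rangle\!\rangle=\text{Tr}(\mathcal{O}_1^\dagger\mathcal{O}_2)$ and let $|\mathcal{O}\rangle\!\rangle$ be the merged-leg MPS of an MPO $\mathcal{O}$; each $|\mathcal{D}_\gamma\rangle\!\rangle$ is injective, so $\text{Tr}(\mathcal{D}_\gamma^\dagger\mathcal{D}_\gamma)=\text{tr}(\mathbb{T}_\gamma^N)=d^N(1+o(1))$ as $N\to\infty$ (nondegenerate leading eigenvalue $d$), and likewise for $\mathcal{D}_\alpha,\mathcal{D}_\beta$, while $\text{Tr}\big((I^{\otimes N})^\dagger I^{\otimes N}\big)=d^N$ exactly ($I^{\otimes N}$ being a bond-one injective sMPO with spectral radius $d$). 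Now suppose the identity did occur, so $\mathcal{D}_\alpha^\dagger\mathcal{D}_\beta=m\,I^{\otimes N}+R$ with $m\ge1$ and $R$ a non-negative-integer combination of finitely many simple symmetries, none equal to $I$. For every such $\mathcal{D}_\gamma\ne I$ the tensor of $|\mathcal{D}_\gamma\rangle\!\rangle$ is not gauge-equivalent to the bond-one tensor of $|I^{\otimes N}\rangle\!\rangle$ -- gauge-equivalence to a bond-one tensor would force $\mathcal{D}_\gamma\propto I^{\otimes N}$ -- so the overlap bound gives $\rho(\mathbb{E})<d$ and hence $|\text{Tr}(\mathcal{D}_\gamma)|=|\langle\!\langle I^{\otimes N}|\mathcal{D}_\gamma\rangle\!\rangle|=o(d^N)$; summing the finitely many terms, $\text{Tr}(R)=o(d^N)$ and therefore $\text{Tr}(\mathcal{D}_\alpha^\dagger\mathcal{D}_\beta)=m\,d^N(1+o(1))$.

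On the other hand $\text{Tr}(\mathcal{D}_\alpha^\dagger\mathcal{D}_\beta)=\langle\!\langle\mathcal{D}_\alpha|\mathcal{D}_\beta\rangle\!\rangle=\text{tr}(\mathbb{E}_{\alpha\beta}^N)$, the ring contraction of the mixed transfer matrix of $|\mathcal{D}_\alpha\rangle\!\rangle$ and $|\mathcal{D}_\beta\rangle\!\rangle$. Since $m\ge1$, this rules out $\rho(\mathbb{E}_{\alpha\beta})<d$ (which would make the overlap $o(d^N)$), while $\rho(\mathbb{E}_{\alpha\beta})>d$ is forbidden by $\rho(\mathbb{E}_{\alpha\beta})^2\le\rho(\mathbb{T}_\alpha)\rho(\mathbb{T}_\beta)=d^2$; hence $\rho(\mathbb{E}_{\alpha\beta})=d$ and the overlap bound is saturated. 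Then the merged-leg tensors of $\mathcal{D}_\alpha$ and $\mathcal{D}_\beta$ are gauge-equivalent, $[M_\beta]_s=c\,X[M_\alpha]_s X^{-1}$, with $|c|=1$ after matching the transfer-matrix spectral radii; imposing periodic boundary conditions then gives $\mathcal{D}_\beta=c^N\mathcal{D}_\alpha$, so $\mathcal{D}_\alpha$ and $\mathcal{D}_\beta$ coincide as simple symmetries (differing only by an overall phase), contradicting the hypothesis. Hence $\mathcal{D}_\alpha^\dagger\times\mathcal{D}_\beta$ contains no identity channel; together with Claim~\ref{claim2} (which guarantees all summands are simple) this is exactly the assertion. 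As a byproduct, distinct simple symmetries are asymptotically orthonormal, $d^{-N}\text{Tr}(\mathcal{D}_\alpha^\dagger\mathcal{D}_\beta)\to\delta_{\alpha\beta}$ -- the lattice counterpart of the orthogonality of distinct simple objects in a fusion category.

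The step I expect to be the real work is the one used as a black box above: the rigidity half of the overlap bound, namely that saturation of $\rho(\mathbb{E}_{AB})^2\le\rho(\mathbb{T}_A)\rho(\mathbb{T}_B)$ is possible only when the two injective tensors differ by a gauge transformation. The inequality itself is one line (Cauchy--Schwarz, $|\text{tr}(\mathbb{E}_{AB}^N)|^2\le\text{tr}(\mathbb{T}_A^N)\text{tr}(\mathbb{T}_B^N)$, then an $N$-th-root limit), but the equality case is the fundamental theorem of matrix product states, and it is there that the canonical-form analysis of Appendix~\ref{app:MPS} is doing the real job. A secondary point is only bookkeeping: one could in fact dispense with tracking $m$, since an identity channel of multiplicity $m\ge2$ would already force an $m$-fold degenerate eigenvalue $d$ of $\mathbb{E}_{\alpha\beta}$, contradicting nondegeneracy outright -- but $m\ge1$ suffices for the contradiction as written.
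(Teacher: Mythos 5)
Your proposal is correct and follows essentially the same route as the paper's proof: both reduce the claim to the statement that an identity channel in $\mathcal{D}_\beta^\dagger\times\mathcal{D}_\alpha$ forces the mixed transfer matrix $\mathbb{T}_{\alpha,\beta^\dagger}$ to attain spectral radius $d$, which is then contradicted by the fact that two distinct (non-gauge-equivalent) injective matrix product states have exponentially small normalized overlap --- the same rigidity statement you correctly flag as the real work. Your detour through $\mathrm{Tr}(\mathcal{D}_\alpha^\dagger\mathcal{D}_\beta)=m\,d^N(1+o(1))$ is a minor variation on the paper's more direct ``identity block $\Rightarrow$ eigenvalue $d$'' step, and if anything is slightly more careful.
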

\begin{proof}
Suppose that two different simple symmetries $\mathcal{D}_{\alpha}$ and $\mathcal{D}_{\beta}$ are represented with injective tensors $M_{\alpha}$ and $M_{\beta}$ (and their associated MPS form $A_{\alpha}$ and $A_{\beta}$, see Eq.~\ref{eq:M2A}) respectively. $M_{\alpha}$ and $M_{\beta}$ are normalized with spectral radius $d$. Define the transfer matrix $\mathbb{T}_{\alpha,\beta^{\dagger}}$, which can be obtained from $\mathbb{M}_{\alpha,\beta^{\dagger}}$ by contracting the two physical indices. 
\begin{align}
\mathbb{T}_{\alpha,\beta^{\dagger}}  = \tikz[baseline=-3pt]{\draw[black, thick] (-0.5,-0.25) -- (0.5,-0.25); \draw[black, thick] (-0.5,0.25) -- (0.5,0.25);\draw[black, thick] (0,-0.5) -- (0,0.5); \draw[black, thick] (-0.2,-0.5) -- (-0.2,0.5);\draw[black,thick] (0,-0.5)..controls (-0.1,-0.6)..(-0.2,-0.5);\draw[black,thick] (0,0.5)..controls (-0.1,0.6)..(-0.2,0.5);\node at (0.25,-0.45){$A_{\alpha}$};\node at (0.25,0.45){$A^{\dagger}_{\beta}$}; }
\label{eq:dt12}
\end{align}
If $\mathcal{D}_{\beta}^{\dagger}\times \mathcal{D}_{\alpha} = I + ...$, then $\mathbb{M}_{\alpha,\beta^{\dagger}}$ contains an identity block on the diagonal. As a consequence, the transfer matrix $\mathbb{T}_{\alpha,\beta^{\dagger}}$ has an eigenvector with eigenvalue $d$. 

Now, to see the contradiction, consider the matrix product state $|\psi_{\alpha}\rangle$ and $|\psi_{\beta}\rangle$ represented by $A_{\alpha}$ and $A_{\beta}$. The norm of $|\psi_{\alpha}\rangle$ and is given by the contraction of copies of $\mathbb{T}_{\alpha}$ along the chain as shown in Eq.~\ref{eq:Ta}. Since $\mathbb{T}_{\alpha}$ has a largest nondegenerate eigenvalue of $d$, the norm of $|\psi_{\alpha}\rangle$ is $d^{N/2}$. The same is true for $|\psi_{\beta}\rangle$. The inner product between the two states $\langle \psi_{\beta} | \psi_{\alpha}\rangle$ is given by the contraction of copies of $\mathbb{T}_{\alpha,\beta^{\dagger}}$ along the chain (replace $\mathbb{T}_{\alpha}$ with $\mathbb{T}_{\alpha,\beta^{\dagger}}$ in Eq.~\ref{eq:Ta}). Since the inner product of two different matrix product states is always exponentially small than 1 when divided by their norm, the eigenvalue of the transfer matrix $\mathbb{T}_{\alpha,\beta^{\dagger}}$ has to be smaller than $d$.

Therefore, it is not possible to have $\mathcal{D}_{\beta}^{\dagger}\times \mathcal{D}_{\alpha} = I + ...$ when $\alpha\neq \beta$ and the `dagger' of a non-invertible symmetry, which fuses with the symmetry with one channel being identity, is unique.
\end{proof}

We remark that these properties are close analogues of those satisfied by anyons in 2D topological states:
\begin{enumerate}
    \item Each anyons $\alpha$ in 2D topological states always have an inverse-anyon $\bar{\alpha}$ such that 
\begin{equation}
\alpha \times \bar{\alpha} = e + ...
\label{eq:anyon}
\end{equation}
where $\times$ represents anyon fusion and $e$ is the trivial anyon. $e$ shows up only once in the decomposition.
\item  The inverse of an anyon is unique.
\item  The fusion of two anyons can result in the sum of multiple anyons with non-negative integer coefficient.
\end{enumerate} 
This is of course nothing surprising because 
when realized in the Symmetry Topological Field Theory formalism, non-invertible 1D symmetries are represented by the string operator of anyons in a 2D topological state, which in general are implemented as sequential circuits. Therefore, for example, Eq.~\ref{eq:1Dnoninvert} and Eq.~\ref{eq:anyon} are manifestations of the same fact. We should emphasize, however, that we did not derive the structure of a fusion category. In fact, our formalism can also be used to describe continuous symmetries (both invertible and non-invertible), which is not captured by a fusion category.

\section{2D non-invertible symmetry with Cheshire twist}
\label{sec:Cheshire}

2D non-invertible symmetries can be non-invertible in a very different way. For example, for the non-invertible symmetry $\mathcal{C}$ with `Cheshire' symmetry twist discussed in this section, no symmetry can fuse with $\mathcal{C}$ such that one of the channels is identity. This is the property we call unannihilable, and is fundamentally related to the fact that not all 1D symmetry twists can be (pair) created `freely' from the vacuum\footnote{The fact that these defects are unannihilable does not contradict the fact that a closed loop of such a defect can be shrunken to the vacuum. However, this operation also requires a sequential circuit that scales with the size of the loop.}. Some 1D symmetry twists can be (pair) created from the vacuum using a 1D finite depth circuit. Others, like the Cheshire ones, can only be created from the vacuum using a 1D sequential circuit. Therefore, for the second type of non-invertible symmetries, the 2D sequential circuit that moves a symmetry twist across the bulk of the system does not contain the full information about the symmetry action. We also need to know the 1D sequential circuit that generates the symmetry twist from vacuum. In this section, we discuss the case of the non-invertible symmetry of 2D Toric Code with the `Cheshire string' symmetry twist. We will show how the full symmetry action can be derived once we know both the 2D sequential circuit for moving the twist and the 1D sequential circuit for generating the twist. The circuits were given in Ref.~\onlinecite{Tantivasadakarn2024} and we review them below. As shown below, the action of the symmetry $\mathcal{C}$ is to project to the symmetric subspace of a 1-form symmetry of the model. 

\begin{figure}[th]
    \centering
    \includegraphics[scale=0.45]{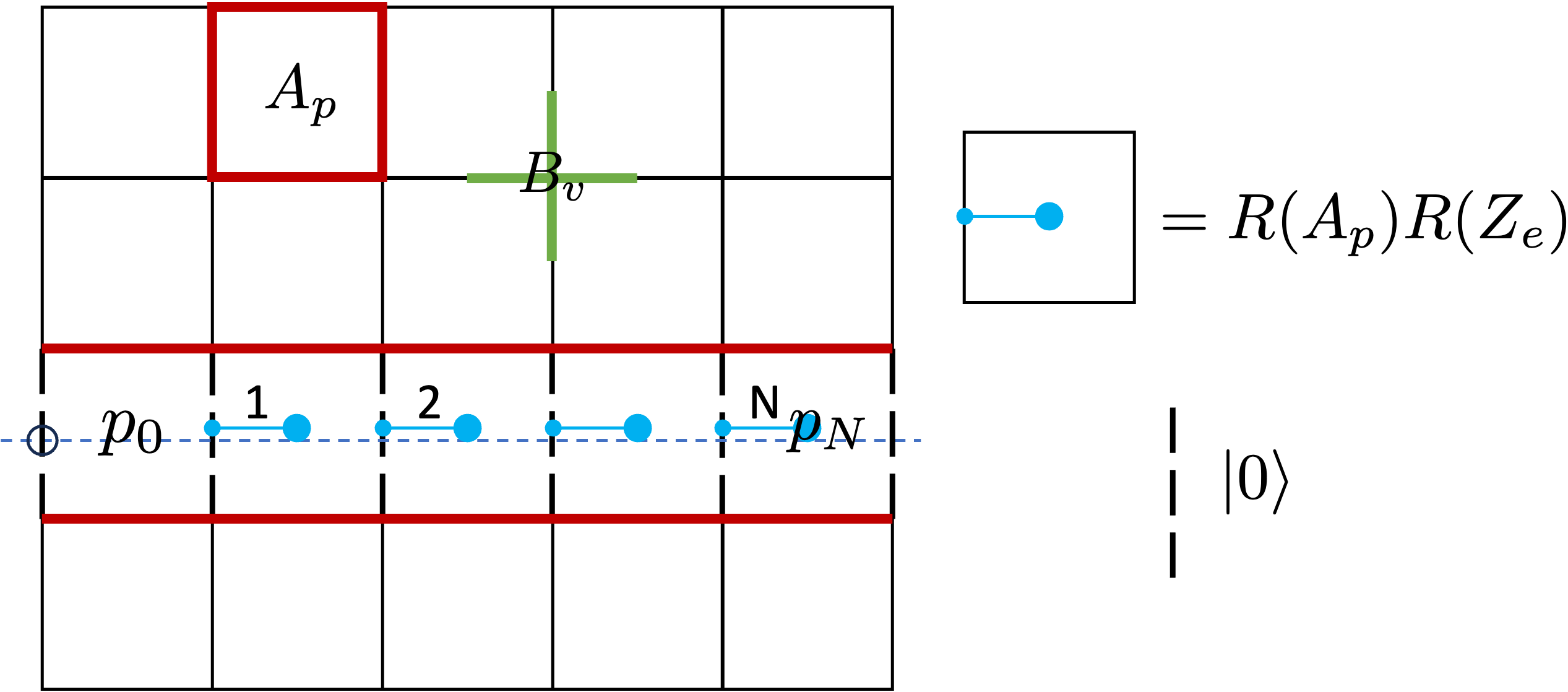}
    \caption{Generation of an $e$-Cheshire string in $2+1$D Toric Code with a sequential circuit. $A_p$ and $B_v$ are Hamiltonian terms of the $2+1$D Toric Code. A Cheshire string (on the dual lattice) from $p_0$ to $p_N$ is generated by applying a sequence of gate sets represented by the blue dot pairs. The dashed black edges are mapped to the product state $|0\rangle$ forming the condensate, while the total charge of the condensate measured by $\prod X$ around the red loop is conserved.}
    \label{fig:2DSe}
\end{figure}

Consider the Toric Code on a two dimensional square lattice with periodic boundary conditions (Fig.\ref{fig:2DSe}), defined with the Hamiltonian
\begin{equation}
  \begin{aligned}
H  = & -\sum_p A_p - \sum_v B_v \\
=&-\sum_p \prod_{e\in p} X_e - \sum_v \prod_{v \in e} Z_e
\end{aligned}  
\end{equation}
Let's call excitation of the $A_p$ terms the gauge charge excitation labeled by $e$ and the excitation of the $B_v$ terms the gauge flux excitation labeled by $m$. The product of $Z_e$ along a closed loop in the dual lattice (including the $B_v$ terms) gives the 1-form $Z_2$ symmetry of this model. Among the 1-form symmetry operators, the ones that runs along nontrivial cycles in the $x$ and $y$ directions -- $W_x$ and $W_y$ -- are logical operators in the Toric Code ground space.

Applying $Z_e$ on one edge creates two gauge charge excitations on the neighboring plaquettes. Having a charge condensate corresponds to enforcing $-Z_e$ as the Hamiltonian term so that the ground state remains invariant under the pair creation or hopping of gauge charges between the neighboring plaquettes. If such a term is enforced on a string of edges on the dual lattice (dotted blue line in Fig.~\ref{fig:2DSe}), we get a Cheshire string for the gauge charge $e$.

Generating the Cheshire string with unitary operations requires a 1D sequential linear depth circuit that starts at plaquette $p_1$ and ends at plaquette $p_N$.
\begin{align}
   U= \prod_{i=1}^{N} R( A_{p_i})R(Z_{e_{i-1,i}})
\end{align}
where we define $R(\mathcal O) \equiv  e^{\frac{i\pi}{4} \mathcal O}$. 

After applying the circuit in $p_1$ through $p_N$, the string operator along the dotted line $\prod Z_e$ becomes equivalent to the $Z_e$ operator on the edge between $p_0$ and $p_N$ (marked by the open circle in Fig.~\ref{fig:2DSe}). Therefore, if we start from the ground state of the Toric Code with $\prod Z_e=1$ along the non-trivial horizontal cycle, we now have a complete Cheshire string. Otherwise, we would need to do a projection to complete the generation step of the Cheshire string symmetry twist.

\begin{figure}[ht]
    \centering
    \includegraphics[scale=0.43]{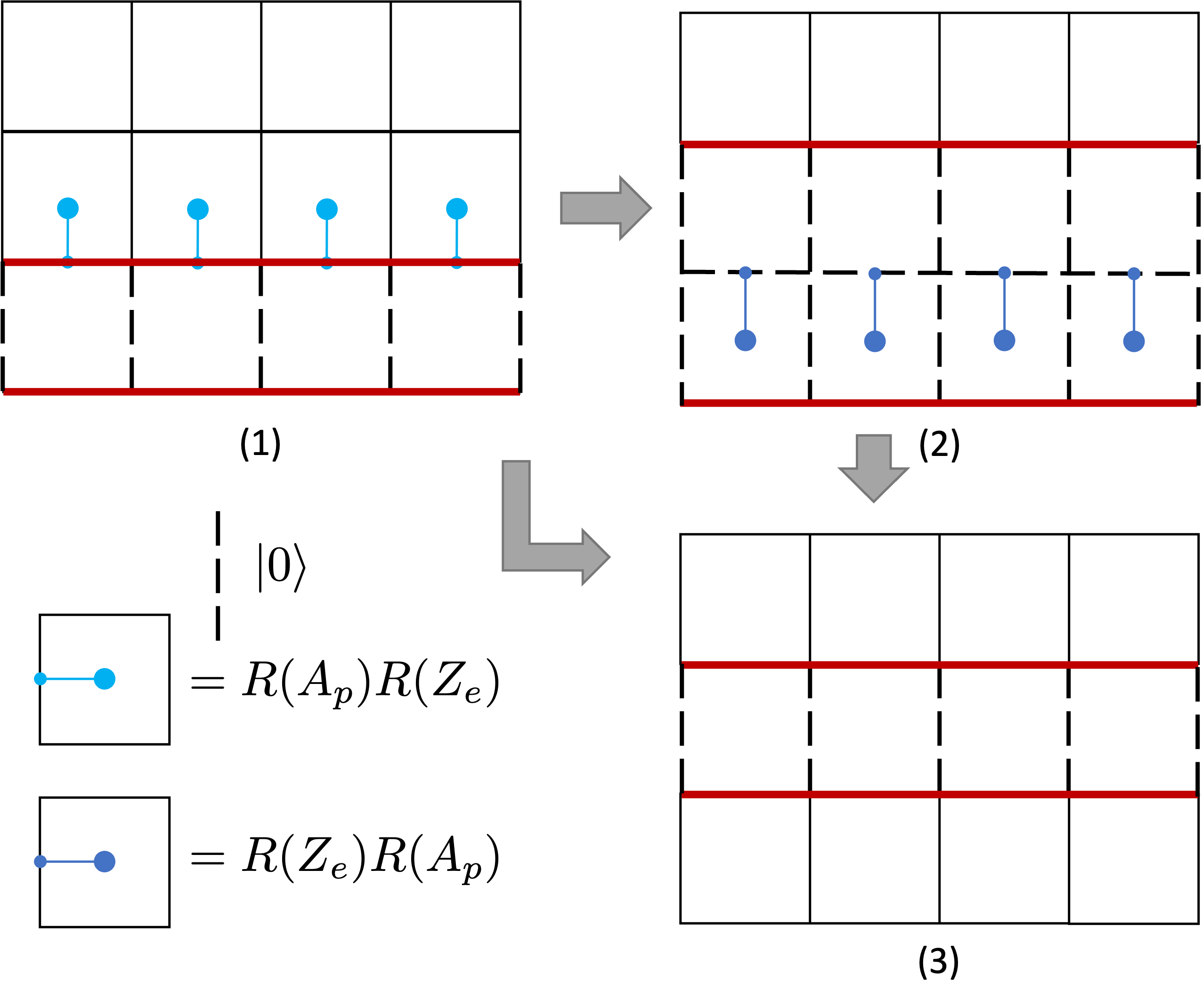}
    \caption{Deforming ((1) to (2) and (2) to (3)) and moving ((1) to (3)) a Cheshire string using a finite depth circuit in $2+1$D. The gate sets in each diagram can be applied in parallel. The dashed black edges are in the $|0\rangle$ state of the condensate. Red loops of $\prod X$ measure the total charge in the condensate.}
    \label{fig:2DSe_move}
\end{figure}

Once the symmetry twist is generated, we can move it with a 2D sequential circuit. As shown in Fig.~\ref{fig:2DSe_move}, if we start from a Cheshire string in the bottom row ((1) with dashed black edges in state $|0\rangle$), we can make it thicker (2) and then thinner (3) using finite depth circuits. The individual gate sets (the blue dot pairs) take the same (inverse) form as in Fig.~\ref{fig:2DSe}. The difference is that, now the gate sets are oriented in parallel, rather than connecting head to toe. It can be easily checked that parallel gate sets commute with each other and hence can be applied simultaneously. Going from (1) to (3) moves the Cheshire string perpendicular to its length by one step. The total charge of the condensate, measured by $\prod X$ along the red loop, is conserved in the whole process. To sweep the Cheshire string across the 2D plane, we can simply sequentially apply this finite depth circuit as the twist moves along.

Finally, when the symmetry twist returns to its original position, we can reverse the 1D sequential circuit in the generation step and remove the Cheshire string with a 1D sequential circuit. 

\begin{figure}[ht]
    \centering
    \includegraphics[scale=0.65]{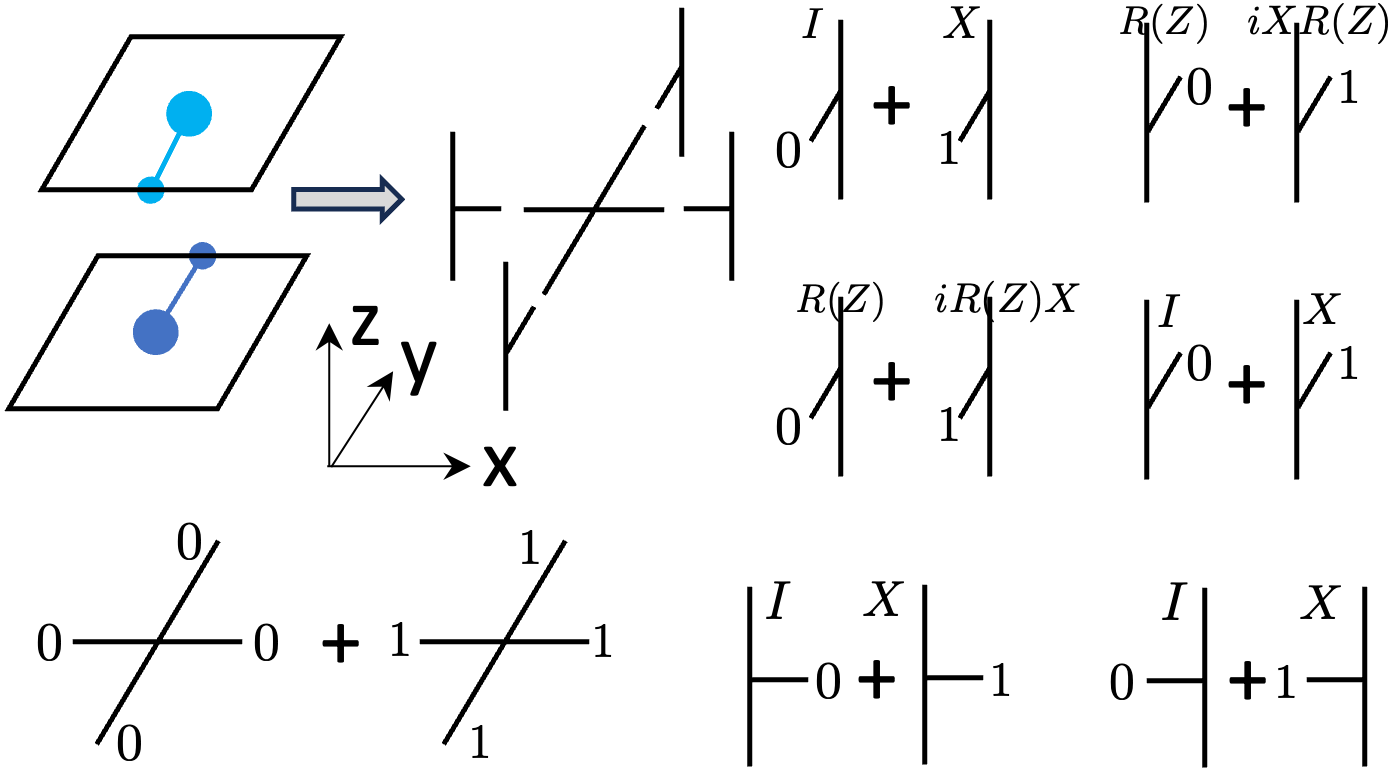}
    \caption{Decomposition of the gate sets in each plaquette (light and dark blue dot pairs) into five parts in the tensor network operator representation. The middle part and the $\pm x$ parts are the same for the two gate sets and are shown in the bottom part of the figure. The $\pm y$ parts are different for the two gate sets and are shown separately in the top right part of the figure.}
    \label{fig:C_TPOa}
\end{figure}

To obtain the full non-invertible symmetry action associated with the Cheshire string symmetry twist, we start from the 2D sequential circuit in the bulk that sweeps the Cheshire string. The gate set within each plaquette can be decomposed into a tensor network form as shown in Fig.~\ref{fig:C_TPOa}. Composing different steps in the sequential circuit together, we find the tensors on the horizontal edges to be

\bigskip
\includegraphics[scale=0.65]{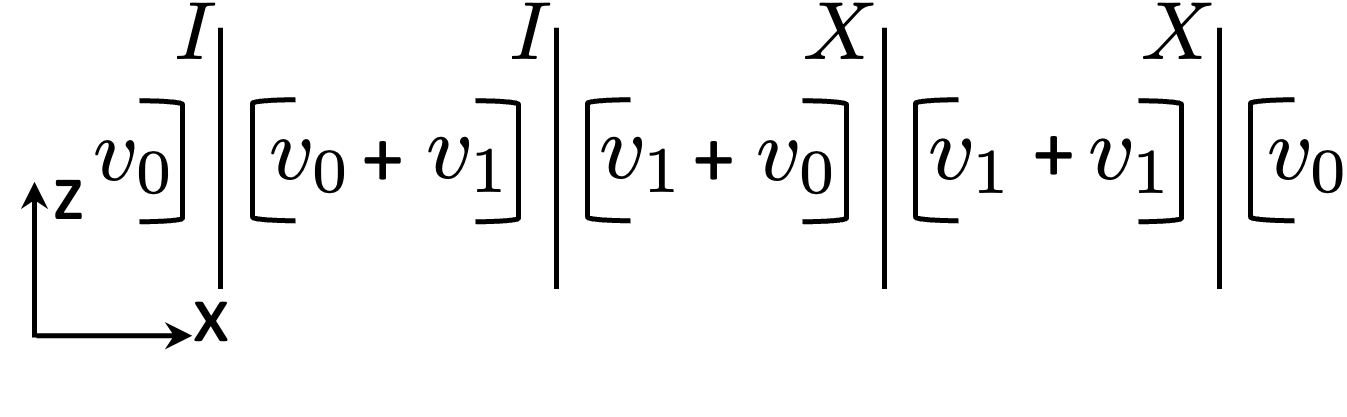},

the tensors on the vertical edges to be

\bigskip
\includegraphics[scale=0.65]{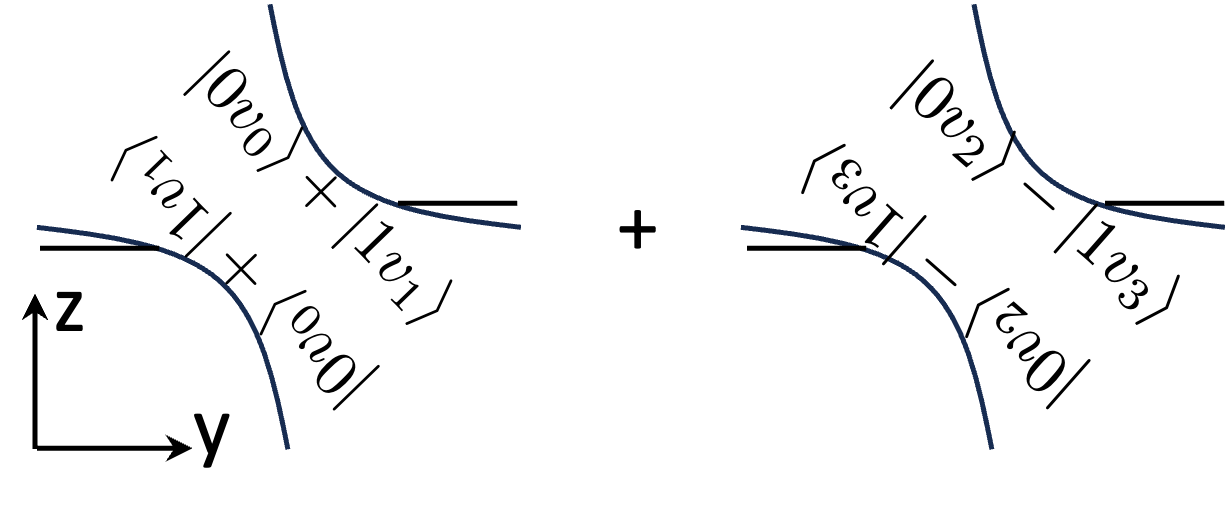},

and the tensor in each plaquette to be

\bigskip
\includegraphics[scale=0.65]{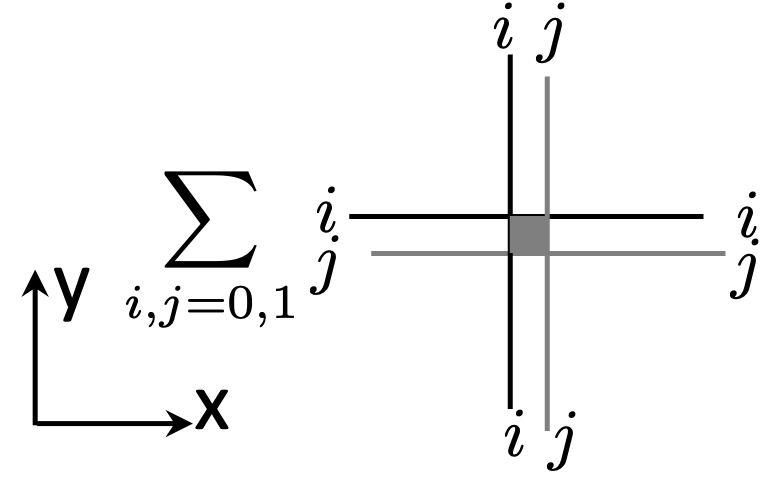}.

where $|v_0\rangle = \ket{00}+\ket{11}$, $|v_1\rangle = \ket{01}+\ket{10}$, $|v_2\rangle = \ket{01}-\ket{10}$, $|v_3\rangle = \ket{00}-\ket{11}$.

\begin{figure}[ht]
    \centering
    \includegraphics[scale=0.65]{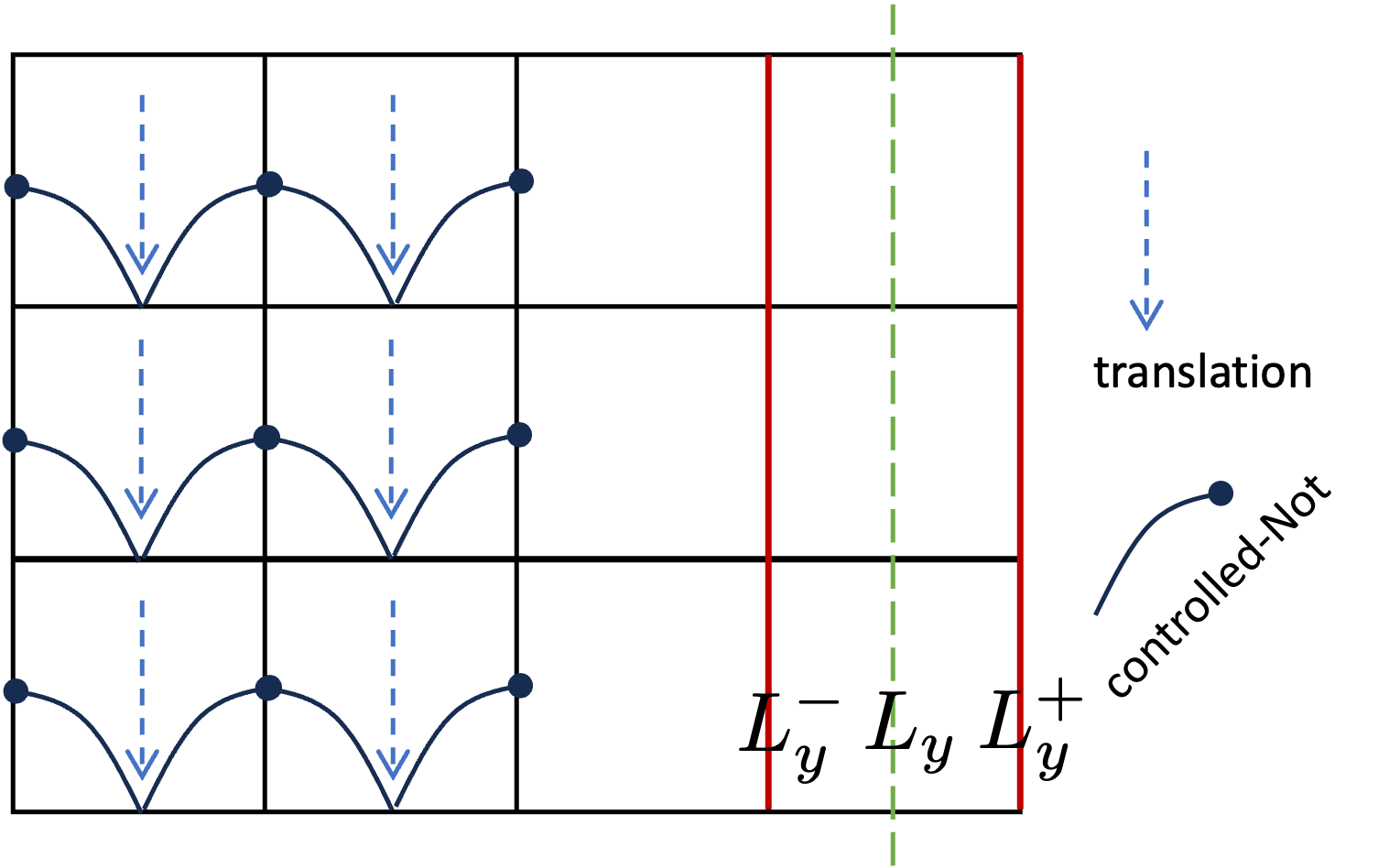}
    \caption{The operator resulting from the tensor network representation of the 2d sequential circuit that sweeps the Cheshire string in Toric Code. Besides the translation and controlled-Not operations, there is a projection on each column onto the eigenvalue $1$ sector of the product of the $X$'s along the red lines and the $Z$'s along the dashed green line.}
    \label{fig:C_2D}
\end{figure}

Now we can connect copies of this set of tensors to make a translation invariant TPO by taking periodic boundary condition in the $y$ direction. The overall operation of the resulting tensor network involves a translation of all horizontal edges by one step in the $-y$ direction, some controlled-Not operation from the horizontal edges to the vertical edges as indicated in Fig.~\ref{fig:C_2D} and, most importantly, a projection operator of 
\begin{equation}
I + \prod_{e\in L_y} Z_e \prod_{e \in L_y^-} X_e \prod_{e \in L_y^+} X_e 
\label{eq:py}
\end{equation}
with the $\prod Z$ string operator on the dashed green line and the two $\prod X$ string operators on the red lines. 

The translation and controlled-Not gates amounts to translation of all local operators symmetric under the 1-form $Z_2$ symmetry. The local symmetric operators come in three types: $Z_e$ on horizontal edges, $Z_e$ on vertical edges and $A_p$ on plaquettes. We can check that under translation and the controlled-Not gate, $Z_e$ on horizontal edges moves downward by one step, so does $A_p$. $Z_e$ on vertical edges is mapped to a three body $Z_eZ_eZ_e$ by adding to it $Z_e$s on the two horizontal edges right beneath it. Taking into consideration the $B_v = \prod_{v\in e} Z_e$ term at the same vertex, we see that $Z_e$ on vertical edges are effectively also translated downward by one step. Since one step translation becomes a trivial transformation in the continuum limit, we are going to ignore its effect.

The important part of the symmetry action is the projection in Eq.~\ref{eq:py}. $\tilde{W}_y = \prod_{e\in L_y} Z_e \prod_{e \in L_y^-} X_e \prod_{e \in L_y^+} X_e $ is a dressed 1-form symmetry operator. The product of two parallel strings of $\prod X$ along $L_+$ and $L_-$ is equivalent to the product of $A_p$ terms along $L$, which is equal to $1$ on the Toric Code ground space. Therefore, as part of a symmetry action on Toric Code, $\tilde{W}_y$ is equivalent to $W_y = \prod_{e\in L_y} Z_e$, the 1-form symmetry operator along the nontrivial cycle in the $y$ direction. Therefore, the projector in Eq.~\ref{eq:py} projects into the symmetric subspace of the 1-form symmetry operator along the nontrivial $y$ direction cycle.

Next we will consider the 1D sequential circuit that generates and annihilates the Cheshire string. We will see that the translation-invariant MPO obtained from these two circuits gives a projection onto the symmetric subspace of $W_x$, the 1-form symmetry operator along the nontrivial $x$ direction cycle. 

The generation step involves the same local gate set used in the sweeping step. We choose the annihilation step to be the inverse of the generation step. Using the tensor network operator decomposition in Fig.~\ref{fig:C_TPOa}, we find that the combined generation - annihilation circuit can be represented with tensors on the horizontal edges given by

\smallskip
\includegraphics[scale=0.65]{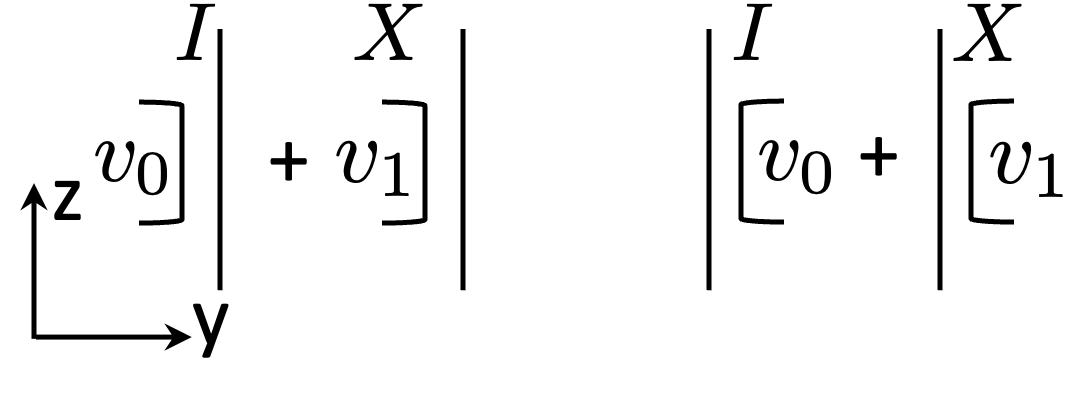},

the tensors on the vertical edges to be

\bigskip
\includegraphics[scale=0.65]{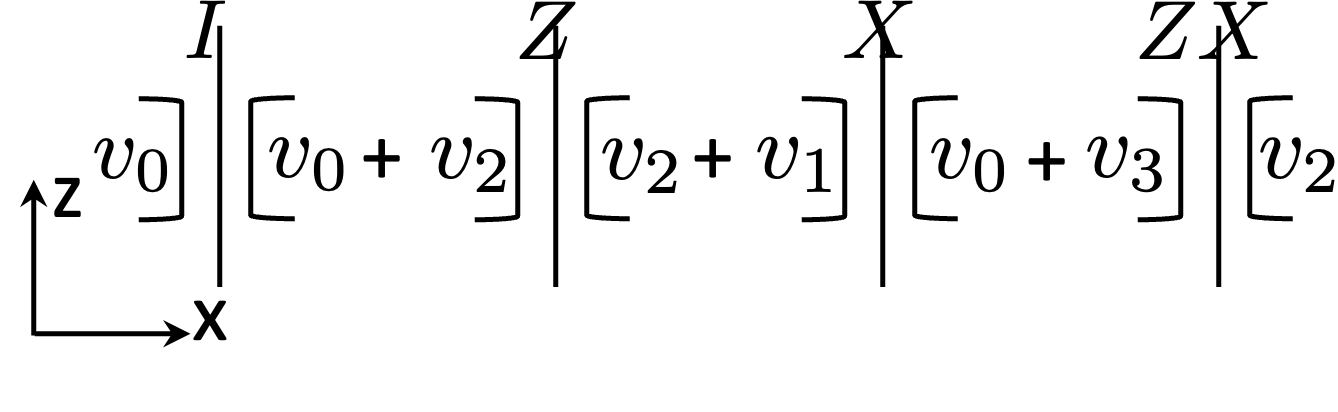},

and the tensor in each plaquette to be

\bigskip
\includegraphics[scale=0.65]{Tp2}.

Combining these tensors, we find the non-invertible symmetry action at the generation and annihilation step to be simply the same projection operator as Eq.~\ref{eq:py}, but oriented in the $x$ direction. 
\begin{equation}
I + \prod_{e\in L_x} Z_e \prod_{e \in L_x^-} X_e \prod_{e \in L_x^+} X_e = I + \tilde{W}_x
\end{equation}
which becomes equivalent to $I+W_x$ when acting on the Toric Code ground state.

The full symmetry action obtained from the 1D sequential circuit in the generation / annihilation step and the 2D sequential circuit in the sweeping step is a projection onto the symmetric space of the 1-form symmetry operator along the $x$ and $y$ nontrivial cycles. 
\begin{equation}
\mathcal{C} = \left(I+W_{x_0}\right)\prod_i\left(I+W_{y_i}\right)
\label{eq:C}
\end{equation}
where the product is over one row (labeled by $x_0$) and all columns (labeled by $y_i$). 

We obtain these particular projectors because we followed a particular pattern in generating (in $x$ direction) and sweeping (in $y$ direction) the symmetry twist. But the symmetry $\mathcal{C}$ is topological, meaning that we can choose any pattern for implementing the symmetry. Suppose that we start by generating a Cheshire string around the $x$ direction cycle but in a form locally deformed from the straight line, we would obtain a projector of the 1-form symmetry operator along the deformed line. Both the projector along the straight line and the deformed line should be part of the symmetry $\mathcal{C}$, which means their difference -- the projector onto a local $B_v$ term -- is part of the symmetry. In our previous derivation using the sequential circuit, this projector did not appear because we were already acting on the Toric Code subspace with $B_v=1$. 

If we combine the projection of local $B_v$ terms with those of nontrivial string operators $W_x$ and $W_y$, the total symmetry action becomes the projection onto the symmetric space of all 1-form operators, on both trivial and nontrivial cycles.
\begin{equation}
\mathcal{C}' = \sum_{\mathcal{L}} W_{\mathcal{L}}
\label{eq:C'}
\end{equation}
where $\mathcal{L}$ labels all closed loops on the dual lattice and $W_{\mathcal{L}}$ denote the 1-form symmetry operator around loop $\mathcal{L}$. When $\mathcal{L}$ shrinks to a point, $W_{\mathcal{L}}$ becomes the identity operator. 

Let us comment on the similarities and differences between the two forms of the full symmetry action $\mathcal{C}$ in Eq.~\ref{eq:C} and $\mathcal{C}'$ in Eq.~\ref{eq:C'}. As symmetries on the Toric Code ground state, $\mathcal{C}$ and $\mathcal{C}'$ have the same action -- they project onto the symmetric ground state of the 1-form symmetry. This naturally leads to the fusion rule
\begin{equation}
\mathcal{C}^{\dagger} \times \mathcal{C} \sim \mathcal{C}
\end{equation}
which also holds for $\mathcal{C}'$. On the other hand, while $\mathcal{C}$ can be implemented with sequential circuits in the bulk and measurements at the boundary, as we have derived in this section, $\mathcal{C}'$ contains local projections throughout the bulk and cannot be implemented this way. $\mathcal{C}'$ has the advantage of being more isotropic and is a short-range correlated operator, like all the generalized symmetry operators in 1D. 

To see that $\mathcal{C}'$ is short range correlated, we can think of the many-body operator with qubits on each edge as a many-body state with two qubits on each edge by combining the input and output indices of a qubit as the output indices of two qubits. In particular, the single qubit identity operator becomes a two qubit entangled state of $|I\rangle\rangle = v_0$ and the single qubit $Z$ operator becomes a two qubit entangled state of $|Z\rangle\rangle = v_3$. The many-body state obtained has a `loop-condensate' structure in the sense that the state is an equal weight superposition of closed loop configurations where $v_0$ on each edge labels the no string state and $v_3$ on each edge labels the string state. The many-body state is hence a modified version of the Toric Code state with the string and no-string states represented by two-qubit entangled states. Like the Toric Code state, $\mathcal{C}'$ is hence short-range correlated but long-range entangled.

With this understanding, it is straightforward to write down a tensor network representation of $\mathcal{C}'$ by slightly modifying the tensor network representation of the Toric Code state. The representation contains tensors at vertices

\bigskip
\includegraphics[scale=0.65]{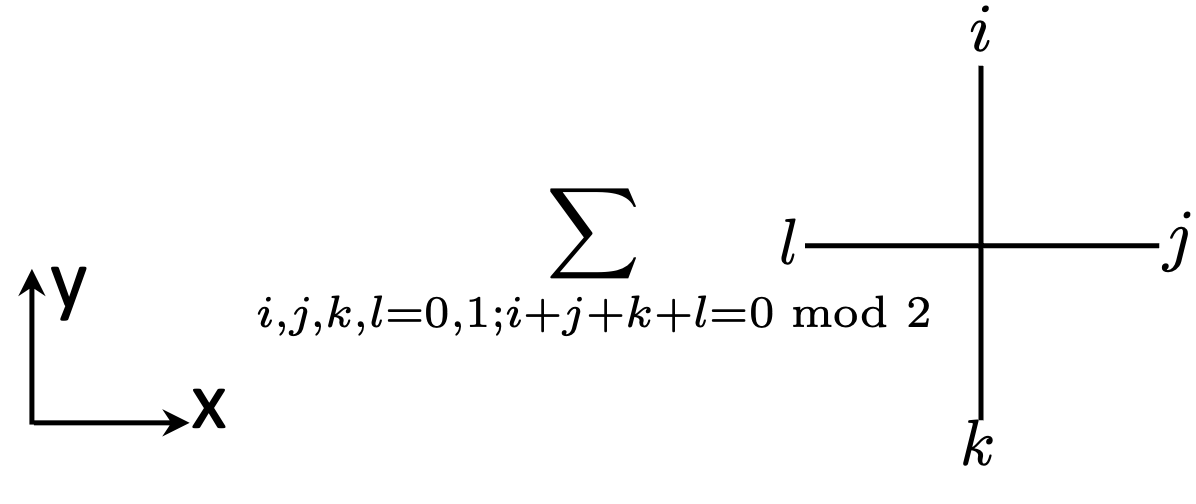},

as well as tensors on the edges.

\bigskip
\includegraphics[scale=0.7]{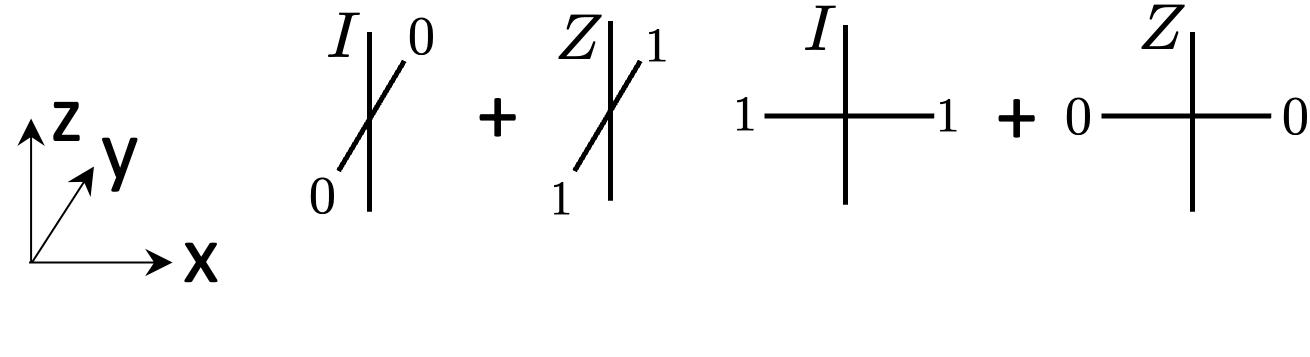}.

connected into a tensor network as shown,

\bigskip
\includegraphics[scale=0.6]{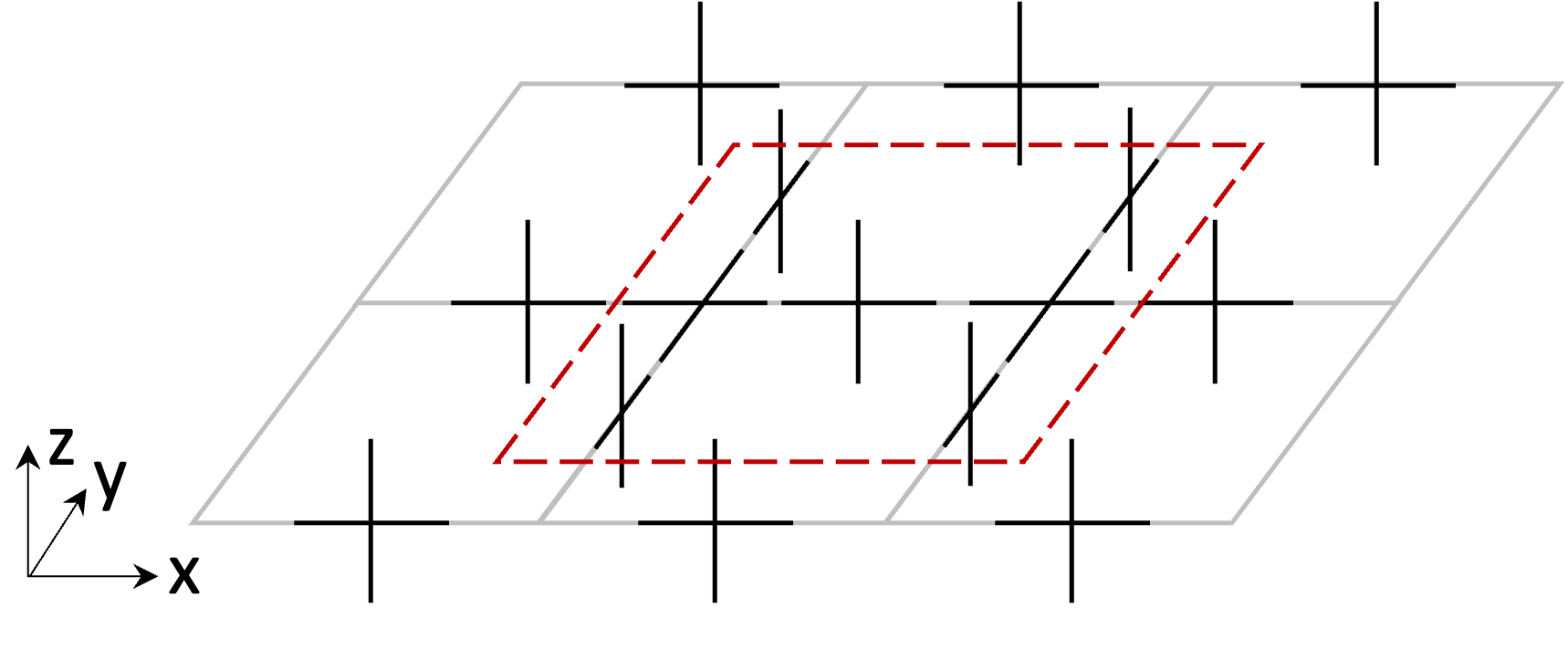}.

One prominent feature of this tensor representation is that, it has an internal $Z_2$ symmetry. Inserting a $Z$ operator on every horizontal index intersected by the red dashed loop in the figure above, we see that the TPO remains invariant. Because of this, the 2D TPO is very different from the 1D MPO discussed in section~\ref{sec:1D} which are all injective. The 2D TPO is not injective, but rather only $Z_2$-injective, a prominent feature of the tensor network state representation of string-net wave functions\cite{Swingle2010,Schuch2010}. This is the new feature that can appear in 2D tensor network operators and relatedly 2D generalized symmetries. 

\section{Outlook}

In this paper, we explored the relation between generalized symmetry and sequential circuit on the lattice, which can be summarized as follows. First, the topological nature of the generalized symmetries dictates that the bulk of the symmetry is implemented as sequential circuits. Secondly, while the sequential circuit cannot represent the full symmetry action which can be non-invertible, it contains the necessary information to construct the full symmetry action. 

To illustrate the second statement, we studied two cases in detail. First, for 1D generalized symmetries, the full symmetry action can be obtained from the sequential circuit that sweeps the symmetry twist in the bulk by forming a translation-invariant form of the matrix product operator representation of the sequential circuit. Using the fact that the matrix product representation of the generalized symmetry comes from a sequential circuit, we can prove many properties of 1D generalized symmetries including the fact that they are all annihilably non-invertible, meaning that the fusion of the symmetry operator $\mathcal{D}$ and its complex conjugate $\mathcal{D}^{\dagger}$ contains one and only one identity channel.
\begin{equation}
\mathcal{D}^{\dagger} \times \mathcal{D} = I + ...
\end{equation}
Our proof applies to generalized symmetries that act on a tensor product Hilbert space, like the Kramers-Wannier example. It is known that some generalized symmetries act only on a constrained Hilbert space, like the Fibonacci symmetry. To generalize our proof to these cases, we need a canonical form for the matrix product state on a constrained Hilbert space. We leave such a proof to future study, although we expect all conclusions (claims~\ref{claim1},\ref{claim2},\ref{claim3},\ref{claim4}) to hold.

Next, we studied a 2D non-invertible symmetry which is unannihilable, that is, there does not exist another symmetry which fuses with it and contains identity in one of the channels. To obtain the full symmetry action, we need not only the 2D sequential circuit that sweeps the symmetry twist but also the 1D sequential circuit that generates and annihilates the symmetry twist. With periodic boundary conditions, the tensor / matrix product representation of the 2D / 1D sequential circuits gives us the full non-invertible symmetry action. We find that an isotropic version of the symmetry operator is a toric-code-like many-body operator, which is short-range correlated but long-range entangled, a feature not possible in 1D. 

We did not systematically study 2D generalized symmetries in this paper, as we did for 1D. Presumably, the 2D generalized symmetries are represented by short-range correlated translation-invariant tensor network operators that come from sequential circuits. In 1D, based on such a setup, we were able to establish many common features of generalized symmetries. In 2D, we already see that there are more possibilities. It will be interesting to see what can be proven in general.

On a more specific note, it might be interesting to think about `half' of the symmetry $\mathcal{C}$. The sequential circuit implementing $\mathcal{C}$ contains two types of gate sets -- the light blue one and the dark blue one (as shown in Fig.~\ref{fig:2DSe} and \ref{fig:2DSe_move}) -- and maps from Toric Code back to Toric Code. If we use only the light blue gate sets, it maps the Toric Code to the Higgs phase and induces a transition. The circuit actually has a lot of similarity with the Kramers-Wannier circuit in 1D. Whether the circuit gives rise to a useful generalized symmetry and how the generalized symmetry might constrain the critical point of the Higgs transition is an interesting problem to look into.


\begin{acknowledgments}
We are indebted to inspiring discussions with Shu-Heng Shao, Sahand Seifnashri, Nathan Seiberg, Daniel Ranard, Robijn Vanhove, Frank Verstraete, Clay Cordova, Sakura Schafer-Nameki and Paul Fendley. We are particularly grateful to Linqian Wu for helping to coin the term `annihilable'. X.C. is supported by the Simons collaboration on `Ultra-Quantum Matter'' (grant number 651438), the Simons Investigator Award (award ID 828078) and the Institute for Quantum Information and Matter at Caltech. N.T. and X.C. are supported by the Walter Burke Institute for Theoretical Physics at Caltech. X.L. is supported by the David and Barbara Groce Fellowship at Caltech. 
\end{acknowledgments}
\newpage

\bibliography{references}

\begin{thebibliography}{25}%
\makeatletter
\providecommand \@ifxundefined [1]{%
 \@ifx{#1\undefined}
}%
\providecommand \@ifnum [1]{%
 \ifnum #1\expandafter \@firstoftwo
 \else \expandafter \@secondoftwo
 \fi
}%
\providecommand \@ifx [1]{%
 \ifx #1\expandafter \@firstoftwo
 \else \expandafter \@secondoftwo
 \fi
}%
\providecommand \natexlab [1]{#1}%
\providecommand \enquote  [1]{``#1''}%
\providecommand \bibnamefont  [1]{#1}%
\providecommand \bibfnamefont [1]{#1}%
\providecommand \citenamefont [1]{#1}%
\providecommand \href@noop [0]{\@secondoftwo}%
\providecommand \href [0]{\begingroup \@sanitize@url \@href}%
\providecommand \@href[1]{\@@startlink{#1}\@@href}%
\providecommand \@@href[1]{\endgroup#1\@@endlink}%
\providecommand \@sanitize@url [0]{\catcode `\\12\catcode `\$12\catcode
  `\&12\catcode `\#12\catcode `\^12\catcode `\_12\catcode `\%12\relax}%
\providecommand \@@startlink[1]{}%
\providecommand \@@endlink[0]{}%
\providecommand \url  [0]{\begingroup\@sanitize@url \@url }%
\providecommand \@url [1]{\endgroup\@href {#1}{\urlprefix }}%
\providecommand \urlprefix  [0]{URL }%
\providecommand \Eprint [0]{\href }%
\providecommand \doibase [0]{https://doi.org/}%
\providecommand \selectlanguage [0]{\@gobble}%
\providecommand \bibinfo  [0]{\@secondoftwo}%
\providecommand \bibfield  [0]{\@secondoftwo}%
\providecommand \translation [1]{[#1]}%
\providecommand \BibitemOpen [0]{}%
\providecommand \bibitemStop [0]{}%
\providecommand \bibitemNoStop [0]{.\EOS\space}%
\providecommand \EOS [0]{\spacefactor3000\relax}%
\providecommand \BibitemShut  [1]{\csname bibitem#1\endcsname}%
\let\auto@bib@innerbib\@empty
\bibitem [{\citenamefont {Gaiotto}\ \emph {et~al.}(2015)\citenamefont
  {Gaiotto}, \citenamefont {Kapustin}, \citenamefont {Seiberg},\ and\
  \citenamefont {Willett}}]{Gaiotto2015}%
  \BibitemOpen
  \bibfield  {author} {\bibinfo {author} {\bibfnamefont {D.}~\bibnamefont
  {Gaiotto}}, \bibinfo {author} {\bibfnamefont {A.}~\bibnamefont {Kapustin}},
  \bibinfo {author} {\bibfnamefont {N.}~\bibnamefont {Seiberg}},\ and\ \bibinfo
  {author} {\bibfnamefont {B.}~\bibnamefont {Willett}},\ }\bibfield  {title}
  {\bibinfo {title} {Generalized global symmetries},\ }\href
  {https://doi.org/10.1007/JHEP02(2015)172} {\bibfield  {journal} {\bibinfo
  {journal} {Journal of High Energy Physics}\ }\textbf {\bibinfo {volume}
  {2015}},\ \bibinfo {pages} {172} (\bibinfo {year} {2015})}\BibitemShut
  {NoStop}%
\bibitem [{\citenamefont {McGreevy}(2023)}]{McGreevy2023}%
  \BibitemOpen
  \bibfield  {author} {\bibinfo {author} {\bibfnamefont {J.}~\bibnamefont
  {McGreevy}},\ }\bibfield  {title} {\bibinfo {title} {Generalized symmetries
  in condensed matter},\ }\href
  {https://doi.org/https://doi.org/10.1146/annurev-conmatphys-040721-021029}
  {\bibfield  {journal} {\bibinfo  {journal} {Annual Review of Condensed Matter
  Physics}\ }\textbf {\bibinfo {volume} {14}},\ \bibinfo {pages} {57} (\bibinfo
  {year} {2023})}\BibitemShut {NoStop}%
\bibitem [{\citenamefont {Schäfer-Nameki}(2024)}]{SCHAFERNAMEKI2024}%
  \BibitemOpen
  \bibfield  {author} {\bibinfo {author} {\bibfnamefont {S.}~\bibnamefont
  {Schäfer-Nameki}},\ }\bibfield  {title} {\bibinfo {title} {Ictp lectures on
  (non-)invertible generalized symmetries},\ }\href
  {https://doi.org/https://doi.org/10.1016/j.physrep.2024.01.007} {\bibfield
  {journal} {\bibinfo  {journal} {Physics Reports}\ }\textbf {\bibinfo {volume}
  {1063}},\ \bibinfo {pages} {1} (\bibinfo {year} {2024})},\ \bibinfo {note}
  {iCTP lectures on (non-)invertible generalized symmetries}\BibitemShut
  {NoStop}%
\bibitem [{\citenamefont {Shao}(2024)}]{Shao2024TASI}%
  \BibitemOpen
  \bibfield  {author} {\bibinfo {author} {\bibfnamefont {S.-H.}\ \bibnamefont
  {Shao}},\ }\href {https://arxiv.org/abs/2308.00747} {\bibinfo {title} {What's
  done cannot be undone: Tasi lectures on non-invertible symmetries}} (\bibinfo
  {year} {2024}),\ \Eprint {https://arxiv.org/abs/2308.00747} {arXiv:2308.00747
  [hep-th]} \BibitemShut {NoStop}%
\bibitem [{\citenamefont {Feiguin}\ \emph {et~al.}(2007)\citenamefont
  {Feiguin}, \citenamefont {Trebst}, \citenamefont {Ludwig}, \citenamefont
  {Troyer}, \citenamefont {Kitaev}, \citenamefont {Wang},\ and\ \citenamefont
  {Freedman}}]{Feiguin2007}%
  \BibitemOpen
  \bibfield  {author} {\bibinfo {author} {\bibfnamefont {A.}~\bibnamefont
  {Feiguin}}, \bibinfo {author} {\bibfnamefont {S.}~\bibnamefont {Trebst}},
  \bibinfo {author} {\bibfnamefont {A.~W.~W.}\ \bibnamefont {Ludwig}}, \bibinfo
  {author} {\bibfnamefont {M.}~\bibnamefont {Troyer}}, \bibinfo {author}
  {\bibfnamefont {A.}~\bibnamefont {Kitaev}}, \bibinfo {author} {\bibfnamefont
  {Z.}~\bibnamefont {Wang}},\ and\ \bibinfo {author} {\bibfnamefont {M.~H.}\
  \bibnamefont {Freedman}},\ }\bibfield  {title} {\bibinfo {title} {Interacting
  anyons in topological quantum liquids: The golden chain},\ }\href
  {https://doi.org/10.1103/PhysRevLett.98.160409} {\bibfield  {journal}
  {\bibinfo  {journal} {Phys. Rev. Lett.}\ }\textbf {\bibinfo {volume} {98}},\
  \bibinfo {pages} {160409} (\bibinfo {year} {2007})}\BibitemShut {NoStop}%
\bibitem [{\citenamefont {Inamura}(2022)}]{Inamura2022}%
  \BibitemOpen
  \bibfield  {author} {\bibinfo {author} {\bibfnamefont {K.}~\bibnamefont
  {Inamura}},\ }\bibfield  {title} {\bibinfo {title} {On lattice models of
  gapped phases with fusion category symmetries},\ }\href
  {https://doi.org/10.1007/JHEP03(2022)036} {\bibfield  {journal} {\bibinfo
  {journal} {Journal of High Energy Physics}\ }\textbf {\bibinfo {volume}
  {2022}},\ \bibinfo {pages} {36} (\bibinfo {year} {2022})}\BibitemShut
  {NoStop}%
\bibitem [{\citenamefont {Garre-Rubio}\ \emph {et~al.}(2023)\citenamefont
  {Garre-Rubio}, \citenamefont {Lootens},\ and\ \citenamefont
  {Moln{\'{a}}r}}]{GarreRubio2023}%
  \BibitemOpen
  \bibfield  {author} {\bibinfo {author} {\bibfnamefont {J.}~\bibnamefont
  {Garre-Rubio}}, \bibinfo {author} {\bibfnamefont {L.}~\bibnamefont
  {Lootens}},\ and\ \bibinfo {author} {\bibfnamefont {A.}~\bibnamefont
  {Moln{\'{a}}r}},\ }\bibfield  {title} {\bibinfo {title} {Classifying phases
  protected by matrix product operator symmetries using matrix product
  states},\ }\href {https://doi.org/10.22331/q-2023-02-21-927} {\bibfield
  {journal} {\bibinfo  {journal} {{Quantum}}\ }\textbf {\bibinfo {volume}
  {7}},\ \bibinfo {pages} {927} (\bibinfo {year} {2023})}\BibitemShut {NoStop}%
\bibitem [{\citenamefont {Fechisin}\ \emph {et~al.}(2025)\citenamefont
  {Fechisin}, \citenamefont {Tantivasadakarn},\ and\ \citenamefont
  {Albert}}]{Fechisin25}%
  \BibitemOpen
  \bibfield  {author} {\bibinfo {author} {\bibfnamefont {C.}~\bibnamefont
  {Fechisin}}, \bibinfo {author} {\bibfnamefont {N.}~\bibnamefont
  {Tantivasadakarn}},\ and\ \bibinfo {author} {\bibfnamefont {V.~V.}\
  \bibnamefont {Albert}},\ }\bibfield  {title} {\bibinfo {title} {Noninvertible
  symmetry-protected topological order in a group-based cluster state},\ }\href
  {https://doi.org/10.1103/PhysRevX.15.011058} {\bibfield  {journal} {\bibinfo
  {journal} {Phys. Rev. X}\ }\textbf {\bibinfo {volume} {15}},\ \bibinfo
  {pages} {011058} (\bibinfo {year} {2025})}\BibitemShut {NoStop}%
\bibitem [{\citenamefont {Seifnashri}\ and\ \citenamefont
  {Shao}(2024)}]{Seifnashri2024}%
  \BibitemOpen
  \bibfield  {author} {\bibinfo {author} {\bibfnamefont {S.}~\bibnamefont
  {Seifnashri}}\ and\ \bibinfo {author} {\bibfnamefont {S.-H.}\ \bibnamefont
  {Shao}},\ }\href {https://arxiv.org/abs/2404.01369} {\bibinfo {title}
  {Cluster state as a non-invertible symmetry protected topological phase}}
  (\bibinfo {year} {2024}),\ \Eprint {https://arxiv.org/abs/2404.01369}
  {arXiv:2404.01369 [cond-mat.str-el]} \BibitemShut {NoStop}%
\bibitem [{\citenamefont {Bhardwaj}\ \emph
  {et~al.}(2025{\natexlab{a}})\citenamefont {Bhardwaj}, \citenamefont
  {Bottini}, \citenamefont {Schafer-Nameki},\ and\ \citenamefont
  {Tiwari}}]{Bhardwaj2025}%
  \BibitemOpen
  \bibfield  {author} {\bibinfo {author} {\bibfnamefont {L.}~\bibnamefont
  {Bhardwaj}}, \bibinfo {author} {\bibfnamefont {L.~E.}\ \bibnamefont
  {Bottini}}, \bibinfo {author} {\bibfnamefont {S.}~\bibnamefont
  {Schafer-Nameki}},\ and\ \bibinfo {author} {\bibfnamefont {A.}~\bibnamefont
  {Tiwari}},\ }\href {https://arxiv.org/abs/2405.05964} {\bibinfo {title}
  {Lattice models for phases and transitions with non-invertible symmetries}}
  (\bibinfo {year} {2025}{\natexlab{a}}),\ \Eprint
  {https://arxiv.org/abs/2405.05964} {arXiv:2405.05964 [cond-mat.str-el]}
  \BibitemShut {NoStop}%
\bibitem [{\citenamefont {Bhardwaj}\ \emph
  {et~al.}(2025{\natexlab{b}})\citenamefont {Bhardwaj}, \citenamefont
  {Bottini}, \citenamefont {Pajer},\ and\ \citenamefont
  {Sch{\"a}fer-Nameki}}]{Bhardwaj2025b}%
  \BibitemOpen
  \bibfield  {author} {\bibinfo {author} {\bibfnamefont {L.}~\bibnamefont
  {Bhardwaj}}, \bibinfo {author} {\bibfnamefont {L.~E.}\ \bibnamefont
  {Bottini}}, \bibinfo {author} {\bibfnamefont {D.}~\bibnamefont {Pajer}},\
  and\ \bibinfo {author} {\bibfnamefont {S.}~\bibnamefont
  {Sch{\"a}fer-Nameki}},\ }\bibfield  {title} {\bibinfo {title} {{Gapped phases
  with non-invertible symmetries: (1+1)d}},\ }\href
  {https://doi.org/10.21468/SciPostPhys.18.1.032} {\bibfield  {journal}
  {\bibinfo  {journal} {SciPost Phys.}\ }\textbf {\bibinfo {volume} {18}},\
  \bibinfo {pages} {032} (\bibinfo {year} {2025}{\natexlab{b}})}\BibitemShut
  {NoStop}%
\bibitem [{\citenamefont {Bhardwaj}\ \emph
  {et~al.}(2023{\natexlab{a}})\citenamefont {Bhardwaj}, \citenamefont
  {Bottini}, \citenamefont {Pajer},\ and\ \citenamefont
  {Schafer-Nameki}}]{Bhardwaj2023Landau}%
  \BibitemOpen
  \bibfield  {author} {\bibinfo {author} {\bibfnamefont {L.}~\bibnamefont
  {Bhardwaj}}, \bibinfo {author} {\bibfnamefont {L.~E.}\ \bibnamefont
  {Bottini}}, \bibinfo {author} {\bibfnamefont {D.}~\bibnamefont {Pajer}},\
  and\ \bibinfo {author} {\bibfnamefont {S.}~\bibnamefont {Schafer-Nameki}},\
  }\href {https://arxiv.org/abs/2310.03786} {\bibinfo {title} {Categorical
  landau paradigm for gapped phases}} (\bibinfo {year} {2023}{\natexlab{a}}),\
  \Eprint {https://arxiv.org/abs/2310.03786} {arXiv:2310.03786
  [cond-mat.str-el]} \BibitemShut {NoStop}%
\bibitem [{\citenamefont {Chen}\ \emph {et~al.}(2011)\citenamefont {Chen},
  \citenamefont {Liu},\ and\ \citenamefont {Wen}}]{Chen2011}%
  \BibitemOpen
  \bibfield  {author} {\bibinfo {author} {\bibfnamefont {X.}~\bibnamefont
  {Chen}}, \bibinfo {author} {\bibfnamefont {Z.-X.}\ \bibnamefont {Liu}},\ and\
  \bibinfo {author} {\bibfnamefont {X.-G.}\ \bibnamefont {Wen}},\ }\bibfield
  {title} {\bibinfo {title} {Two-dimensional symmetry-protected topological
  orders and their protected gapless edge excitations},\ }\href
  {https://doi.org/10.1103/PhysRevB.84.235141} {\bibfield  {journal} {\bibinfo
  {journal} {Phys. Rev. B}\ }\textbf {\bibinfo {volume} {84}},\ \bibinfo
  {pages} {235141} (\bibinfo {year} {2011})}\BibitemShut {NoStop}%
\bibitem [{\citenamefont {Levin}\ and\ \citenamefont {Gu}(2012)}]{Levin2012}%
  \BibitemOpen
  \bibfield  {author} {\bibinfo {author} {\bibfnamefont {M.}~\bibnamefont
  {Levin}}\ and\ \bibinfo {author} {\bibfnamefont {Z.-C.}\ \bibnamefont {Gu}},\
  }\bibfield  {title} {\bibinfo {title} {Braiding statistics approach to
  symmetry-protected topological phases},\ }\bibfield  {journal} {\bibinfo
  {journal} {Physical Review B}\ }\textbf {\bibinfo {volume} {86}},\ \href
  {https://doi.org/10.1103/physrevb.86.115109} {10.1103/physrevb.86.115109}
  (\bibinfo {year} {2012})\BibitemShut {NoStop}%
\bibitem [{\citenamefont {Chen}\ \emph {et~al.}(2024)\citenamefont {Chen},
  \citenamefont {Dua}, \citenamefont {Hermele}, \citenamefont {Stephen},
  \citenamefont {Tantivasadakarn}, \citenamefont {Vanhove},\ and\ \citenamefont
  {Zhao}}]{Chen2024}%
  \BibitemOpen
  \bibfield  {author} {\bibinfo {author} {\bibfnamefont {X.}~\bibnamefont
  {Chen}}, \bibinfo {author} {\bibfnamefont {A.}~\bibnamefont {Dua}}, \bibinfo
  {author} {\bibfnamefont {M.}~\bibnamefont {Hermele}}, \bibinfo {author}
  {\bibfnamefont {D.~T.}\ \bibnamefont {Stephen}}, \bibinfo {author}
  {\bibfnamefont {N.}~\bibnamefont {Tantivasadakarn}}, \bibinfo {author}
  {\bibfnamefont {R.}~\bibnamefont {Vanhove}},\ and\ \bibinfo {author}
  {\bibfnamefont {J.-Y.}\ \bibnamefont {Zhao}},\ }\bibfield  {title} {\bibinfo
  {title} {Sequential quantum circuits as maps between gapped phases},\ }\href
  {https://doi.org/10.1103/PhysRevB.109.075116} {\bibfield  {journal} {\bibinfo
   {journal} {Phys. Rev. B}\ }\textbf {\bibinfo {volume} {109}},\ \bibinfo
  {pages} {075116} (\bibinfo {year} {2024})}\BibitemShut {NoStop}%
\bibitem [{\citenamefont {Bhardwaj}\ \emph
  {et~al.}(2023{\natexlab{b}})\citenamefont {Bhardwaj}, \citenamefont
  {Bottini}, \citenamefont {Fraser-Taliente}, \citenamefont {Gladden},
  \citenamefont {Gould}, \citenamefont {Platschorre},\ and\ \citenamefont
  {Tillim}}]{Bhardwaj2023GS}%
  \BibitemOpen
  \bibfield  {author} {\bibinfo {author} {\bibfnamefont {L.}~\bibnamefont
  {Bhardwaj}}, \bibinfo {author} {\bibfnamefont {L.~E.}\ \bibnamefont
  {Bottini}}, \bibinfo {author} {\bibfnamefont {L.}~\bibnamefont
  {Fraser-Taliente}}, \bibinfo {author} {\bibfnamefont {L.}~\bibnamefont
  {Gladden}}, \bibinfo {author} {\bibfnamefont {D.~S.~W.}\ \bibnamefont
  {Gould}}, \bibinfo {author} {\bibfnamefont {A.}~\bibnamefont {Platschorre}},\
  and\ \bibinfo {author} {\bibfnamefont {H.}~\bibnamefont {Tillim}},\ }\href
  {https://arxiv.org/abs/2307.07547} {\bibinfo {title} {Lectures on generalized
  symmetries}} (\bibinfo {year} {2023}{\natexlab{b}}),\ \Eprint
  {https://arxiv.org/abs/2307.07547} {arXiv:2307.07547 [hep-th]} \BibitemShut
  {NoStop}%
\bibitem [{\citenamefont {Kramers}\ and\ \citenamefont
  {Wannier}(1941)}]{Kramer1941}%
  \BibitemOpen
  \bibfield  {author} {\bibinfo {author} {\bibfnamefont {H.~A.}\ \bibnamefont
  {Kramers}}\ and\ \bibinfo {author} {\bibfnamefont {G.~H.}\ \bibnamefont
  {Wannier}},\ }\bibfield  {title} {\bibinfo {title} {Statistics of the
  two-dimensional ferromagnet. part i},\ }\href
  {https://doi.org/10.1103/PhysRev.60.252} {\bibfield  {journal} {\bibinfo
  {journal} {Phys. Rev.}\ }\textbf {\bibinfo {volume} {60}},\ \bibinfo {pages}
  {252} (\bibinfo {year} {1941})}\BibitemShut {NoStop}%
\bibitem [{\citenamefont {Ignacio~Cirac}\ \emph {et~al.}(2017)\citenamefont
  {Ignacio~Cirac}, \citenamefont {Perez-Garcia}, \citenamefont {Schuch},\ and\
  \citenamefont {Verstraete}}]{Cirac2017}%
  \BibitemOpen
  \bibfield  {author} {\bibinfo {author} {\bibfnamefont {J.}~\bibnamefont
  {Ignacio~Cirac}}, \bibinfo {author} {\bibfnamefont {D.}~\bibnamefont
  {Perez-Garcia}}, \bibinfo {author} {\bibfnamefont {N.}~\bibnamefont
  {Schuch}},\ and\ \bibinfo {author} {\bibfnamefont {F.}~\bibnamefont
  {Verstraete}},\ }\bibfield  {title} {\bibinfo {title} {Matrix product
  unitaries: structure, symmetries, and topological invariants},\ }\href
  {https://doi.org/10.1088/1742-5468/aa7e55} {\bibfield  {journal} {\bibinfo
  {journal} {Journal of Statistical Mechanics: Theory and Experiment}\ }\textbf
  {\bibinfo {volume} {2017}},\ \bibinfo {pages} {083105} (\bibinfo {year}
  {2017})}\BibitemShut {NoStop}%
\bibitem [{\citenamefont {\ifmmode \mbox{\c{S}}\else
  \c{S}\fi{}ahino\ifmmode~\breve{g}\else \u{g}\fi{}lu}\ \emph
  {et~al.}(2018)\citenamefont {\ifmmode \mbox{\c{S}}\else
  \c{S}\fi{}ahino\ifmmode~\breve{g}\else \u{g}\fi{}lu}, \citenamefont {Shukla},
  \citenamefont {Bi},\ and\ \citenamefont {Chen}}]{Sahinoglu2018}%
  \BibitemOpen
  \bibfield  {author} {\bibinfo {author} {\bibfnamefont {M.~B.}\ \bibnamefont
  {\ifmmode \mbox{\c{S}}\else \c{S}\fi{}ahino\ifmmode~\breve{g}\else
  \u{g}\fi{}lu}}, \bibinfo {author} {\bibfnamefont {S.~K.}\ \bibnamefont
  {Shukla}}, \bibinfo {author} {\bibfnamefont {F.}~\bibnamefont {Bi}},\ and\
  \bibinfo {author} {\bibfnamefont {X.}~\bibnamefont {Chen}},\ }\bibfield
  {title} {\bibinfo {title} {Matrix product representation of locality
  preserving unitaries},\ }\href {https://doi.org/10.1103/PhysRevB.98.245122}
  {\bibfield  {journal} {\bibinfo  {journal} {Phys. Rev. B}\ }\textbf {\bibinfo
  {volume} {98}},\ \bibinfo {pages} {245122} (\bibinfo {year}
  {2018})}\BibitemShut {NoStop}%
\bibitem [{\citenamefont {Styliaris}\ \emph {et~al.}(2025)\citenamefont
  {Styliaris}, \citenamefont {Trivedi}, \citenamefont {Perez-Garcia},\ and\
  \citenamefont {Cirac}}]{Styliaris2025}%
  \BibitemOpen
  \bibfield  {author} {\bibinfo {author} {\bibfnamefont {G.}~\bibnamefont
  {Styliaris}}, \bibinfo {author} {\bibfnamefont {R.}~\bibnamefont {Trivedi}},
  \bibinfo {author} {\bibfnamefont {D.}~\bibnamefont {Perez-Garcia}},\ and\
  \bibinfo {author} {\bibfnamefont {J.~I.}\ \bibnamefont {Cirac}},\ }\bibfield
  {title} {\bibinfo {title} {Matrix-product unitaries: {B}eyond quantum
  cellular automata},\ }\href {https://doi.org/10.22331/q-2025-02-25-1645}
  {\bibfield  {journal} {\bibinfo  {journal} {{Quantum}}\ }\textbf {\bibinfo
  {volume} {9}},\ \bibinfo {pages} {1645} (\bibinfo {year} {2025})}\BibitemShut
  {NoStop}%
\bibitem [{Note1()}]{Note1}%
  \BibitemOpen
  \bibinfo {note} {The fact that these defects are unannihilable does not
  contradict the fact that a closed loop of such a defect can be shrunken to
  the vacuum. However, this operation also requires a sequential circuit that
  scales with the size of the loop.}\BibitemShut {Stop}%
\bibitem [{\citenamefont {Tantivasadakarn}\ and\ \citenamefont
  {Chen}(2024)}]{Tantivasadakarn2024}%
  \BibitemOpen
  \bibfield  {author} {\bibinfo {author} {\bibfnamefont {N.}~\bibnamefont
  {Tantivasadakarn}}\ and\ \bibinfo {author} {\bibfnamefont {X.}~\bibnamefont
  {Chen}},\ }\bibfield  {title} {\bibinfo {title} {String operators for
  cheshire strings in topological phases},\ }\href
  {https://doi.org/10.1103/PhysRevB.109.165149} {\bibfield  {journal} {\bibinfo
   {journal} {Phys. Rev. B}\ }\textbf {\bibinfo {volume} {109}},\ \bibinfo
  {pages} {165149} (\bibinfo {year} {2024})}\BibitemShut {NoStop}%
\bibitem [{\citenamefont {Swingle}\ and\ \citenamefont
  {Wen}(2010)}]{Swingle2010}%
  \BibitemOpen
  \bibfield  {author} {\bibinfo {author} {\bibfnamefont {B.}~\bibnamefont
  {Swingle}}\ and\ \bibinfo {author} {\bibfnamefont {X.-G.}\ \bibnamefont
  {Wen}},\ }\href {https://arxiv.org/abs/1001.4517} {\bibinfo {title}
  {Topological properties of tensor network states from their local gauge and
  local symmetry structures}} (\bibinfo {year} {2010}),\ \Eprint
  {https://arxiv.org/abs/1001.4517} {arXiv:1001.4517 [cond-mat.str-el]}
  \BibitemShut {NoStop}%
\bibitem [{\citenamefont {Schuch}\ \emph {et~al.}(2010)\citenamefont {Schuch},
  \citenamefont {Cirac},\ and\ \citenamefont
  {P{\'e}rez-Garc{\'\i}a}}]{Schuch2010}%
  \BibitemOpen
  \bibfield  {author} {\bibinfo {author} {\bibfnamefont {N.}~\bibnamefont
  {Schuch}}, \bibinfo {author} {\bibfnamefont {I.}~\bibnamefont {Cirac}},\ and\
  \bibinfo {author} {\bibfnamefont {D.}~\bibnamefont {P{\'e}rez-Garc{\'\i}a}},\
  }\bibfield  {title} {\bibinfo {title} {Peps as ground states: Degeneracy and
  topology},\ }\href
  {https://doi.org/https://doi.org/10.1016/j.aop.2010.05.008} {\bibfield
  {journal} {\bibinfo  {journal} {Annals of Physics}\ }\textbf {\bibinfo
  {volume} {325}},\ \bibinfo {pages} {2153} (\bibinfo {year}
  {2010})}\BibitemShut {NoStop}%
\bibitem [{\citenamefont {Perez-Garcia}\ \emph {et~al.}(2007)\citenamefont
  {Perez-Garcia}, \citenamefont {Verstraete}, \citenamefont {Wolf},\ and\
  \citenamefont {Cirac}}]{Perez-Garcia2007}%
  \BibitemOpen
  \bibfield  {author} {\bibinfo {author} {\bibfnamefont {D.}~\bibnamefont
  {Perez-Garcia}}, \bibinfo {author} {\bibfnamefont {F.}~\bibnamefont
  {Verstraete}}, \bibinfo {author} {\bibfnamefont {M.~M.}\ \bibnamefont
  {Wolf}},\ and\ \bibinfo {author} {\bibfnamefont {J.~I.}\ \bibnamefont
  {Cirac}},\ }\bibfield  {title} {\bibinfo {title} {{Matrix product state
  representations}},\ }\href {https://doi.org/10.26421/QIC7.5-6-1} {\bibfield
  {journal} {\bibinfo  {journal} {Quant. Inf. Comput.}\ }\textbf {\bibinfo
  {volume} {7}},\ \bibinfo {pages} {401} (\bibinfo {year} {2007})},\ \Eprint
  {https://arxiv.org/abs/quant-ph/0608197} {arXiv:quant-ph/0608197}
  \BibitemShut {NoStop}%
\end{thebibliography}%

\appendix

\section{Canonical form of matrix product state}
\label{app:MPS}

In this section, we review the canonical form of the matrix product states (MPS) with periodic boundary condition derived in Ref.~\onlinecite{Perez-Garcia2007}. For the discussion in this paper, it is important to know not only the statement of the canonical form theorem, but also how the canonical form is derived. We review the steps taken to obtain the canonical form of a matrix product state in a constructive manner. To apply the result to matrix product operators, which is the center of discussion in the main text, we mention some minor modifications at the end.

Given a translationally invariant MPS with bond dimension $D$ on a 1D periodic lattice defined by the $D\times D$ representation matrices $A_i$, where $i$ is the index of on-site physical Hilbert space basis, we can define the transfer matrix operator 
\begin{equation}
\mathbb{T}(X)=\sum_i A_i X A_i^\dagger
\end{equation}

Due to the `completely-positive' property of $\mathbb{T}$, it can be shown that $\mathbb{T}$ has a positive fixed point. That is,
\begin{equation}
\mathbb{T}(X)=\sum_i A_i X A_i^\dagger=\lambda^2 X
\end{equation}
where $\lambda^2 > 0$ is the largest (in magnitude) eigenvalue of $\mathbb{T}$ and $X$ is a positive matrix. $\lambda^2$ is called the spectral radius of $\mathbb{T}$.

Note that it is possible that $\mathbb{T}$ has more than one fixed points. Among all the fixed points of $\mathbb{T}$, choose one with the minimum support (WLOG, let's still label it as $X$). That is, no other fixed point $X'$ has a support strictly contained in the support of $X$. If $X$ is invertible (i.e. supported in the full space), then $B_i=\lambda^{-1} X^{-\frac{1}{2}} A_i X^{\frac{1}{2}}$ satisfies $\sum_i B_i B_i^{\dagger}=\mathbb{I}$ and $\mathbb{I}$ is the only fixed point of the transfer matrix of $B_i$. $\lambda B_i$ gives the canonical form of the MPS. In this case, the MPS has only one block in its canonical form and is called `injective'. 

If $X$ is not invertible, denote by $P$ the projection onto the support space of $X$. It can be shown that
\begin{equation}
A_i P = P A_i P.
\end{equation}
Therefore, if $P_{\perp}$ denotes the projection onto the orthogonal subspace, in the block basis spanned by $P$ and $P_{\perp}$, the matrix \( A_i \) takes the block upper triangular form
\begin{equation}
A_i = \begin{bmatrix}
PA_iP & P A_i P_{\perp} \\
0 & P_{\perp} A_i P_{\perp}
\end{bmatrix}.
\end{equation}
The transfer matrix of the first diagonal block $PA_iP$ has a unique full-rank (within $P$) fixed point $X$, similar to the injective case. Defining $B_i = \lambda^{-1}X^{-1/2}PA_iPX^{1/2}$, $C_i = P_{\perp}A_iP_{\perp}$, $F_i = X^{-1/2}PA_iP_{\perp}$,  we can rewrite the original MPS in terms of the following $D\times D$ matrices
\begin{equation}
A_i = 
\begin{bmatrix}
\lambda B_i & F_i \\
0 & C_i
\end{bmatrix}\ .
\end{equation}
where the first diagonal block is in the canonical form having $\mathbb{I}$ as the only fixed point $\sum B_iB_i^{\dagger} = \mathbb{I}$. We can repeat the procedure for $C_i$, so that eventually the $A_i$'s can be decomposed into a block upper triangular form
\begin{equation}
A_i = \left(
\begin{array}{cccccc}
\lambda_1 A_i^{1} & F_i^{1, 2} & F_i^{1, 3} & \cdots \\
     0    & \lambda_2 A_i^{2} & F_i^{2, 3} & \cdots \\
     0    &     0     & \lambda_3 A_i^{3} & \cdots \\
     0    &     0     &     0     & \ddots
\end{array}
\right)
\label{eq:MPScano}
\end{equation}
where $\lambda_1^2\geq\lambda_2^2\geq \lambda_3^2 \geq ... >0$ are the spectral radius of each diagonal block and the matrices $A_i^m$ in each block satisfy the conditions:

1. $\sum_i A_i^m A_i^{m \dagger}=\mathbb{I}$.

2. $\sum_i A_i^{m \dagger} \Lambda^m A_i^m=\Lambda^m$, for some diagonal positive and full-rank matrices $\Lambda^m$.

3. $\mathbb{I}$ is the only fixed point of the operator $\mathbb{T}_m(X)=\sum_i A_i^m X A_i^{m \dagger}$.

Note that replacing all the upper triangular blocks $F_i^{m,m'}$ by $0$ does not change the physical wavefunction of the MPS.

In our calculations in this paper, we work with Matrix Product Operator (MPO) representations, which can be mapped into MPS by treating the pair of physical indices \( ii' \) of the MPO tensor \( M \) as a single combined physical index \( \tilde{i}=(ii') \) for the tensor \( A_{\tilde{i}} \), i.e.,
\begin{equation}
    A_{\tilde{i}} = M_{ii'}.
\end{equation}
With this identification, the standard canonical decompositions for translation-invariant Matrix Product State above can be applied to translation-invariant matrix product operators.

For MPS, the overall normalization is not important and we are free to re-scale the matrices. For the sequential MPO discussed in this paper, the normalization is directly associated with the fact that the MPO comes from a sequential unitary circuit and cannot be changed arbitrarily. In particular, for injective sMPOs, we will use the normalization
\begin{equation}
    \sum_{ii'} M_{ii'} \left(M_{ii'}\right)^\dagger = d \, \mathbb{I},
\end{equation}
where \( d \) denotes the physical Hilbert space dimension per site, as specified in Equation~(\ref{eq:dt}).

For non-injective sMPOs, each diagonal block satisfies the above normalization.

\end{document}